\newtheorem{theorem}{Theorem}
\newtheorem{claim}{Claim}
\newtheorem{proposition}{Proposition}
\newtheorem{lemma}{Lemma}
\newtheorem{definition}{Definition}
\newcommand{\cost}[2]{{\mathcal{C}_{#1}^{(#2)}}}
\DeclareMathOperator{\calE}{\mathcal{E}}
\DeclareMathOperator{\calN}{\mathcal{N}}
\DeclareMathOperator{\bbR}{\mathbb{R}}
\DeclareMathOperator{\bbE}{\mathbb{E}}
\DeclareMathOperator{\bbN}{\mathbb{N}}
\newcommand{\mean}[1]{\langle #1 \rangle}
\newcommand{\Var}{\textnormal{Var}}
\newcommand{\smeas}{\sigma_{m}}
\newcommand{\smove}{\sigma_{d}}
\newcommand{\tb}{\overline{\theta}}
\newcommand{\centre}[2]{\mean{\theta_{-#1}^{(#2)}}}
\newcommand{\one}{\mathds{1}}
\newcommand{\minus}{\textnormal{-}}
\newcommand{\thh}{\hat{\theta}}
\newcommand{\DInh}{\Delta^{\textnormal{inh}}}
\newcommand{\DRel}{\Delta^{\textnormal{rel}}}
\newcommand{\calM}{\mathcal{M}}
\newcommand{\calH}{\mathcal{H}}
\newcommand{\W}{\mathcal{W}}
\newcommand{\pa}[1]{\left( #1 \right)}
\begin{document}

\title{Distributed Alignment Processes \\with Samples of Group Average
\thanks{\copyright 2022 IEEE. Personal use of this material is permitted. Permission from IEEE must be obtained for all other uses, in any current or future media, including reprinting/republishing this material for advertising or promotional purposes, creating new collective works, for resale or redistribution to servers or lists, or reuse of any copyrighted component of this work in other works. The published article can be found at \url{https://ieeexplore.ieee.org/document/9913713}.}
\thanks{This work has received funding from the European Research Council (ERC) under the European Union's Horizon 2020 research and innovation program (grant  agreement No 648032).}
}
\author{Amos Korman, Robin Vacus}
\date{}

\maketitle

\begin{abstract}
This paper studies a stochastic alignment problem assuming that agents can sense the general tendency of the system. More specifically, we consider $n$ agents, each being associated with a real number. In each round, each agent receives a noisy measurement of the system's average value and then updates its value. This value is then perturbed by random drift. We assume that both noise and drift are Gaussian. We prove that a distributed weighted-average algorithm optimally minimizes the deviation of each agent from the average value, and for every round. Interestingly, this optimality holds even in the centralized setting, where a master agent can gather all the agents' measurements and instruct a move to each one. We find this result surprising since it can be shown that the set of measurements obtained by all agents contains strictly more information about the deviation of Agent $i$ from the average value, than the information contained in the measurements obtained by Agent $i$ alone. Although this information is relevant for Agent $i$, it is not processed by it when running a weighted-average algorithm. Finally, we also analyze the drift of the center of mass and show that no distributed algorithm can achieve drift that is as small as the one that can be achieved by the best centralized algorithm.
\end{abstract}

\section{Introduction}
\label{sec:introduction}
\subsection{Background and Motivation}

Reaching agreement, or approximate agreement, is fundamental to many distributed systems, including: computer networks, mobile sensor systems, animal groups, and neural networks \cite{vicsek_review,reynolds,simons2004many}. In the natural world, one fascinating example occurs during {\em cooperative transport} in ants, in which a group of ants join forces to carry a large or heavy food load \cite{ncomm-ofer,gelblum2016emergent}. While carrying the load, improved consensus on the direction of forces applied by the ants amounts to faster movement of the carried load, since less forces are canceling each other. Another beautiful example concerns flocking birds (or, e.g., schooling fish) that manage to maintain a cohesive direction of movement by constantly aligning their directions of flights \cite{vicsek_review}. In these contexts, as well as in multiple other contexts, such as during clock synchronization \cite{sundararaman2005clock,sivrikaya2004time}, the space in which the process occurs is continuous, measurements are noisy and the output needs to be maintained over time, while being subject to drift. 
 
In several contexts, distributed consensus algorithms can be designed by letting each agent repeatedly execute the following two steps.
\begin{enumerate}
    \item  obtain an estimate regarding the average ``opinion''  of other agents in the system. 
    \item  shift the ``opinion'' by a certain extent towards the estimated average.
\end{enumerate}
The first step assumes that agents are somehow able to sense the general behavior of the system.
Such an ability may be quite limited in many contexts, but not in all. We shall soon provide several examples for scenarios in which agents are able to measure global parameters of the system, either directly, or indirectly. However, before we list these examples, we would like to stress that the current paper is not concerned with the mechanism behind implementing such samples, and instead focuses on optimizing the second step, corresponding to the shifts that should be made after obtaining such samples.
 
\subsubsection*{Examples of global parameter samplings} We next provide several examples for settings in which agents can sample global parameters. The emphasis is on examples from the biological world. We classify these examples into three categories.
\begin{itemize}
   \item {\em Using an external entity as a relay of information.} 
In several contexts, agents may use an external entity (or a ``leader'' agent) to gather information from agents, compute a certain function of it (e.g., the average value), and then communicate this information to all others.
Indirectly, this happens, for example,  during the process of {\em cooperative transport}, in which agents physically carry a large or heavy object, as mentioned above. In this process the carried object itself serves as a medium entity through which the global tendency of the group could be sensed by each carrying agent. In the context of ants, for example, the authors of  \cite{ncomm-ofer,mccreery2014cooperative,korman2021sequential,gelblum2016emergent} assumed that each ant uses the carried object to sense the sum of the forces exerted by all carrying ants. After measuring this sum (which represents the velocity vector of the carried object), each ant decides whether, and by how much, it should align its force to this measurement. Cooperative transport following a similar approach was also studied in the context of robotics, see, e.g., \cite{wang2016multi,berman2011study}.
Other natural examples include the behavior of a swarm of flying midges \cite{gorbonos2016long}, and during firefly synchronization \cite{guerraoui2015byzantine}. In such swarms, each midge (respectively, firefly) is affected by the acoustic (respectively, visual) field produced by all others. In these examples, the environment plays the role of relaying global information to agent. 
\item
{\em Broadcast in clique networks.} 
Obtaining measurements of the whole system is also relevant for relatively packed systems, so that the communication is essentially being broadcasted over a clique. 
In the natural world, the clique abstraction can find relevance in small flocks of birds, or in other small animal groups \cite{tamm1980bird,conradt2005consensus}.
\item
{\em Well-mixed populations.}
The assumption that agents sense global parameters of the system also finds relevance in highly mobile systems. In particular, when the population is well-mixed, the pattern of communication may be captured by a random process, in which the set of neighbors is constantly being replaced by agents chosen uniformly at random.  
In such cases, sensing the current set of neighbors can serve as a fresh random sample of the global population. This perspective has recently been adopted by multiple works in theoretical computer science (although in contrast to this paper, the typical assumption in these works is that the set of opinions is discrete), see, e.g., \cite{doerr2011stabilizing,becchetti2017simple,amos-soda,feinerman2017breathe,becchetti2020consensus,ghaffari2018nearly}. For example, in {\em Majority Dynamics}, each agent pulls the opinions of a few random agents, and changes its opinion to the majority opinion among these samples \cite{doerr2011stabilizing,becchetti2020consensus,ghaffari2018nearly,becchetti2017simple}.
As such, these protocols can be viewed as being composed of two parts: (1) obtain a noisy sample of the majority opinion in the population based on the opinions of the pulled agents, and (2) align to this majority.
\end{itemize}

Informally, the {\em alignment} problem we study, considers a group of $n$ agents positioned on the real line, 
aiming to be located as close as possible to one another. 
We stress that the setting on the line does not necessarily aim to capture the physical locations of the agents, which may actually be scattered in arbitrary domains (e.g., two or three dimensional Euclidean spaces). Instead, the positions on the line aim to capture the ``opinions'' of the agents, which, in this preliminary work, are restricted to a one dimensional space. For example, in view of general alignment problems, such as flocking, the position of an agent aims to model its direction. When flocks of birds fly, the altitude axis often remains stable, and hence the flight could be viewed as two-dimensional, implying that the direction could be represented as one-dimensional. A similar observation holds in many contexts of two-dimensional navigation, for example, during {\em cooperative transport} by ants. 
In view of clock-synchronization \cite{sundararaman2005clock,sivrikaya2004time} or firefly synchronization \cite{guerraoui2015byzantine}, the position of an agent models its clock (i.e., time), which again can be viewed as a one-dimensional space  \footnote{Depending on the application, the actual domain may be bounded, or periodic. For example, when modeling directions, the domain is $[-\pi,\pi]$ and when modeling clock synchronization, the domain may be $[0,T]$ for some phase duration $T$. Since we are interested in the cases where agents are more or less aligned, approximating an interval domain with the real line is not expected to reduce the generality of our results.}. 

Initially, agents' positions are sampled from a Gaussian distribution around 0. Execution proceeds in discrete rounds, where in each round, each agent receives a noisy measurement of its current deviation from the average position of others. Then, governed by the rules of its algorithm, each agent performs a {\em move} to re-adjust its position. Subsequently, before the next round begins, the position of each agent is perturbed following random drift. The drift component may model unreliability due to external conditions (e.g., perturbation of direction by wind), or the inability of an agent to precisely adjust itself as it wishes. Both noises in measurements and random drifts are governed by Gaussian distributions. 
We are mostly interested in the following questions:
\begin{itemize}
    \item 
Which re-adjustment rule should agents adopt if their goal is to minimize expected distance of each agent from the average position? 
\item Could further communication between agents (e.g., by sharing measurements) help?
\item Which protocols minimize the drift of the center of mass?
\end{itemize}
Importantly, we assume that agents are unaware of the actual value of their current positions,  and of the realizations of the random drifts, and instead, must base their movement decisions only on noisy measurements of relative positions. This lack of global ``sense of orientation'' prevents the implementation of the trivial distributed protocol in which all agents simply move to a predetermined point, say~0. 

One trivial algorithm is the ``fully responsive'' protocol, where in each round, each agent moves all the way to its current measurement of the average position of others. 
When drift is large compared to measurement noise,
and the number of agents is large, this protocol can become  highly efficient.  However, when measurement noise is non-negligible compared to the drift, it is expected that incorporating past measurements could enhance the cohesion, even though  drift may have changed the configuration considerably. 

Perhaps the simplest class of algorithms that take such past information into account are {\em weighted-average} algorithms. By weighing the current position against the measured position in a non-trivial way, such algorithms can potentially exploit the fact that the current position implicitly encodes information from past measurements. Indeed, in a centralized setting, when a single agent aims to estimate a fixed target relying on noisy Gaussian measurements, a weighted-average algorithm is known to be optimal, in the sense that it minimizes the expected distance between the agent and the target \cite{anderson2012optimal}. However, here the setting is more complex since it is distributed, and the objective goal is to estimate (and get closer to) the average position, which is a function of the agents decisions. 

\subsection{The alignment problem}\label{sec:model}
We consider $n$ agents located on the real line $\mathbb{R}$.
Let $I = \{1,\ldots,n\}$ be the set of agents. 
We denote by $\theta_i^{(t)} \in \bbR$ the position of Agent~$i$ at round $t$, where initially, the position of each agent is chosen independently according to a normal distribution around~$0$, with variance $\sigma_0^2$, that is, for each Agent~$i$,  $\theta_i^{(0)} \sim \calN \pa{0,\sigma_0^2}.$
As mentioned, we assume that an agent is unaware of the actual value of its current position.

Execution proceeds in discrete rounds. At round $t$, each agent $i$ receives a noisy measurement of the deviation from the current average position of all other agents. Specifically, denote the average of the positions of all agents except $i$ by:
\begin{equation*}
    \centre{i}{t} = \tfrac{1}{n-1} \sum_{j\in I\setminus i} \theta_i^{(t)}.
\end{equation*}
Let $\tb_i^{(t)} = \centre{i}{t} - \theta_i^{(t)}$ denote the {\em stretch} of Agent $i$. 
At any round $t$, for every $i \in I$, a noisy measurement of the stretch of Agent~$i$ is sampled:
    \begin{equation}
       \label{eq:noise} Y_i^{(t)} = \tb_i^{(t)} + N_{m,i}^{(t)},
    \end{equation}
where $N_{m,i}^{(t)} \sim \calN(0,\smeas^2)$. In response, Agent $i$ makes a move $d\theta_i^{(t)}$ and may update its memory state (if it has any).
Finally,  the position of Agent $i$ at the next round is obtained by adding a drift:
\begin{equation} \label{eq:update_rule}
    \theta_i^{(t+1)} = \theta_i^{(t)} + d\theta_i^{(t)} + N_{d,i}^{(t)},
\end{equation}
where $N_{d,i}^{(t)} \sim \calN(0,\smove^2)$.
All random perturbations $(N_{m,i}^{(t)})_{i \in I}$ and $(N_{d,i}^{(t)})_{i \in I}$ are mutually independent, and~$\smeas,\smove > 0$.

\begin{definition}[{\em Cost}] \label{def:cost}
The {\em cost} of Agent $i$ at a given time $t$ is the absolute value of its expected stretch at that time
\footnote{Another natural cost measure is the deviation from the average position of all agents (including the agent), i.e.,
    $\cost{i}{t}' := \bbE (| \tfrac{1}{n}\sum_{j\in I} \theta_j^{(t)} - \theta_i^{(t)}|).$
These two measures are effectively equivalent. Indeed,  $\cost{i}{t}' = \tfrac{n-1}{n} \cost{i}{t}$, thus an algorithm minimizing one measure will also minimize the other.}\label{foot_COM}, i.e.,
\[ \cost{i}{t} := \bbE \pa{\left| \tb_i^{(t)} \right|}.\]
\end{definition}
Note that the cost depends on the algorithm used by $i$ and on the algorithms used by others. As these algorithms will be clear from the context, we  omit mentioning them in notations.
\begin{definition}[{\em Optimality}] \label{def:algorithm_optimality}
We say that an algorithm is {\em optimal} if, for every~$i \in I$ and every round~$t$, no algorithm can achieve a strictly smaller cost $\cost{i}{t}$.
Note that the definition refers to a very strong notion of optimality --- it implies that if Algorithm $A$ is optimal, then for any given round $t$ and for any given Agent $i$, no algorithm $B$ can achieve better performances for Agent $i$ at round $t$ than the performances guaranteed by $A$, regardless of what $B$ did in previous rounds.
\end{definition}
\begin{definition}[{\em Weighted-average algorithms}]
A {\em weighted-average} algorithm is a distributed algorithm that is characterized by a responsiveness parameter $\rho^{(t)}$ for each round $t$, indicating the weight given to the measurement at that round. Formally, an agent~$i$ following the {\em weighted-average} algorithm $\W(\rho^{(t)})$ at round $t$, sets
\begin{equation}\label{eq:responsive} d\theta_i^{(t)} = \rho^{(t)} Y_i^{(t)}. 
\end{equation}
\end{definition}

\noindent{\bf Full communication (or centralized) model.} 
When executing a weighted-average algorithm, an agent bases its decisions solely on its own measurements. 
A main question we ask is whether, and if so to what extent, can performances be improved if agents could communicate with each other to share their measurements. In order to study the impact of communication, we compare the performances of the best weighted-average algorithm to the performances of the best algorithm in the {\em full-communication model}, where agents are free to share their measurements with all other agents at no cost. In the case that agents have identities, this setting is essentially equivalent to the following centralized setting: Consider a {\em master} agent that is external to the system. The master agent receives, at any round $t$, the stretch measurements of all agents, i.e., the collection $\{Y_j^{(t)}\}_{j\in I}$, where these measurements are noisy in the same manner as described in Eq.~\eqref{eq:noise}. Analyzing these measurements at round $t$, the master agent then instructs each agent $i$ to move by a  quantity $d\theta_i^{(t)}$. After moving, the agents are subject to drift, as described in Eq.~\eqref{eq:update_rule}. Note that the master agent is unable to ``see'' the positions of the agents, and its information regarding their locations is based only on the measurements it gathers from the agents, and on the movements it instructs. 
The goal of the master agent is to minimize the maximal cost of an agent, (where the maximum is over the agents), per round. In particular, an algorithm is said to be 
optimal in the centralized setting if it satisfies Definition \ref{def:algorithm_optimality}, (where optimality is with respect to all algorithms in the centralized setting).

\subsection{Our results}

We prove that a distributed \emph{weighted-average} algorithm, termed $\W^\star$, optimally minimizes the expected stretch of all agents at all rounds, in the strong sense of Definition~\ref{def:algorithm_optimality}.
The optimality of this weighted-average algorithm holds even in the centralized (full-communication) model. We find this result surprising since it can be shown that the measurements made by Agent $i$ contain strictly less information about the agent's relative position to the average position  than the information contained in the collection of all measurements $\{Y_j^{(t)}\}_{j\in I}$.
Indeed, although the $\{Y_j^{(t)}\}_{j\in I}$ measurements are not centered around the stretch $\tb_i^{(t)}$ of Agent~$i$, they still contain useful information about the stretch of Agent $i$. For example, it can be shown that
\begin{equation}\label{eq:moreknowledge}
    - \sum_{j \in I\setminus i} Y_j^{(t)} = \tb_i^{(t)} - \sum_{j \in I\setminus i}^n N_{m,j}^{(t)},
\end{equation}
thus representing an additional ``fresh'' estimation of $\tb_i^{(t)}$. Hence, the measurements obtained by Agent~$i$ are not sufficient statistics with respect to all measurements, when aiming to estimate the stretch of Agent~$i$.
Therefore, at a first glance, it may appear that in the centralized setting the stretch of agents could potentially be reduced by letting the master agent process all  measurements.
Nevertheless, when operating under $\W^\star$, Agent~$i$ processes merely its own measurements, and still its stretch is minimized, as it could best be minimized by utilizing all measurements. The reason why this happens, is that other agents manage to fully process the information not processed by Agent $i$ in a way that also benefits Agent~$i$, by shifting the center of mass towards it.

Finally, we also analyze the drift of the center of mass and show that no distributed algorithm can achieve as small of a drift as can be achieved by the best centralized algorithm. In light of this, we also show that the drift associated with $\W^\star$ incurs a relatively small overhead over the best possible drift in the centralized setting.
 \\

\noindent{\bf Intuition: The case of Two Agents in One round.} \label{sec:intuition}
\newcommand{\cent}[1]{{#1}^\textnormal{cent}}
\newcommand{\dist}[1]{{#1}^\textnormal{dist}}
In order to obtain preliminary understanding, let us first examine the case of two agents operating in a single round, and compare the situation between the centralized and the distributed setting. This simple example is informative since it already demonstrates the aforementioned surprising phenomenon. 

To really limit this example to ``one round'', we assume that the initial distribution of the stretches has a very large variance (formally, $\sigma_0^2 \gg \smeas^2$). Therefore, we can put aside the Bayesian conflict between the a priory knowledge about the stretches, and the information contained in the new measurement -- and only work with the latter.

When considering two agents, the {\em stretch} $\tb_i$ of Agent $i\in\{1,2\}$ is simply its relative distance to the other agent. Assume that their positions are initially set to~$\theta_1 = 0$ and $\theta_2 = 1$, so that~$\tb_1 = +1$ and $\tb_2 = -1$.
Assume further that the respective measurement noises are~$N_{m,1} = 0.1$ and $N_{m,2} = 0.2$.
In this case, the measurements of the stretches are:
\begin{equation}\label{eq:cent_measure}
    \begin{cases}
        Y_1 = \tb_1 + N_{m,1} = +1.1 \\ Y_2 = \tb_2 + N_{m,2} = -0.8.
    \end{cases}
\end{equation}

\noindent{\em \underline{Centralized setting.}} Here, a central entity gathers both measurements $Y_1$ and $Y_2$, and, based on these measurements, instructs each agent of its  move.
Since we assumed that the initial distribution of the stretches has a very large variance compared to~$\smeas^2$, the most accurate estimates of the relative stretches between the agents {depends only on the measurements} 
\begin{equation}\label{eq:est_cent}
    \begin{cases}
        \cent{~\thh_1} = \tfrac{1}{2} (Y_1 - Y_2) = \tfrac{+1.9}{2} = +0.95 \\
        \cent{~\thh_2} = \tfrac{1}{2} (Y_2 - Y_1) = \tfrac{-1.9}{2} = -0.95.
    \end{cases}
\end{equation}

Recall that the goal of the central entity is to minimize the absolute stretch between the agents. Note that for this purpose, it has a degree of freedom, since, once it moves each of them, shifting both by the same amount would not change their stretch. However, if the central entity also wishes to minimize the drift of the center of mass, then it should instruct the agents to move towards the estimated center, shifting them by~$\thh_i/2$ towards each-other. (This type of algorithm will later be called ``meet at the center''.)
Specifically, the instructed moves would be~$\cent{d\theta_1} = \tfrac{1}{2} \cent{~\thh_1} =  +0.475$ and~$\cent{d\theta_2} = \tfrac{1}{2} \cent{~\thh_2} = -0.475$.
By definition, the new positions of the agents (before drift applies) would be:
\begin{equation}\label{eq:cent_positions}
    \begin{cases}
        \cent{\theta_1'} = \theta_1 + \cent{d\theta_1} = 0.475 \\ \cent{\theta_2'} = \theta_2 + \cent{d\theta_2} = 0.525
    \end{cases} \text{so}
    \begin{cases}
        \cent{\tb_1'} = +0.05, \\ \cent{\tb_2'} = -0.05.
    \end{cases}
\end{equation}
As can be seen, the center of mass is unchanged. In fact, this is no coincidence and would be the case for any measurement noises (before drift takes place). 

\noindent{\em \underline{Distributed setting.}}
In the \textit{distributed} setting, an agent must base its estimate of the stretch on its measurement alone. In this case, we have:
\begin{equation*}
    \begin{cases}
        \dist{~\thh_1} = Y_1 = +1.1, \\ \dist{~\thh_2} = Y_2 = -0.8.
    \end{cases}
\end{equation*}
Of course, the quality of these estimates is less than in the centralized setting (compare with Eq.~\eqref{eq:est_cent}). 
Once again, in attempting to ``meet at the center'', the best that the agents can do is to go half way to their respective estimate of the location of the other agent. That is, $\dist{d\theta_1} = \tfrac{1}{2} \dist{~\thh_1} = \tfrac{1}{2} \cdot Y_1 = +0.55$ and~$\dist{d\theta_2} = \tfrac{1}{2} \dist{~\thh_2} = \tfrac{1}{2} \cdot Y_2 = -0.4$.
By definition, their new locations (before drift applies) would be:
\begin{equation*}
    \begin{cases}
        \dist{\theta_1'} = \theta_1 + \dist{d\theta_1} = 0.55 \\ \dist{\theta_2'} = \theta_2 + \dist{d\theta_2} = 0.6
    \end{cases},
    \text{ yielding}
    \begin{cases}
        \dist{\tb_1'} = +0.05 \\ \dist{\tb_2'} = -0.05.
    \end{cases}
\end{equation*}
Evidently, in contrast to the centralized case (Eq.~\eqref{eq:cent_positions}), here the center of mass moved as a result of the agents movements. Importantly, however, the relative positions of the agents (and therefore, their costs) are identical to the ones obtained in the centralized protocol. Again, this is no coincidence, and holds for any measurement noises. The reason is, that the two locations of the agents in the distributed setting differ by the same shift, compared to their location in the centralized setting. Indeed, we have:
\begin{equation*}
    \begin{cases}
     \dist{\theta_1'}- \cent{\theta_1'}  =   \dist{d\theta_1} - \cent{d\theta_1} = \tfrac{1}{4}(Y_1+Y_2) \\
        \dist{\theta_2'} - \cent{\theta_2'} =  \dist{d\theta_2} - \cent{d\theta_2}=  \tfrac{1}{4}(Y_1+Y_2).
    \end{cases}
\end{equation*}
This shift, i.e., $\tfrac{1}{4}(Y_1+Y_2)$, represents the ``mistake'' made by distributed agents in attempting to meet at the center, compared with the centralized algorithm.
However, since both agents in the distributed setting are shifted by the same quantity compared to their respective positions in the centralized setting, their relative stretch remains the same in both settings. A main challenge of this paper is to show that this phenomenon holds also in the multi-round scenario, and with arbitrary number of agents.

\noindent{\bf Weighted-average algorithms with fixed responsiveness.}
As a warm-up towards understanding multi-round scenarios with multiple agents, we first investigate weighted-average algorithms in which all agents have the same responsiveness~$\rho$, that furthermore remains fixed throughout the execution (see Eq.~\eqref{eq:responsive}). The proofs of both Theorems~\ref{thm:weighted_average1} and~\ref{thm:weighted_average2} below are given in Section~\ref{sec:warm-up_proofs}.

\begin{theorem}\label{thm:weighted_average1}
Assume that all agents execute $\W(\rho)$, for a fixed $0 \leq \rho \leq 1$. Then for every $i \in I$ and every $t \in \mathbb{N}$, the stretch $\tb_i^{(t)}$ is normally distributed, and
 \[\lim_{t \rightarrow +\infty} \Var\pa{\tb_i^{(t)}} = \frac{ \tfrac{n}{n-1}(\rho^2\smeas^2 + \smove^2)}{1-(1-\tfrac{n}{n-1} \rho)^2},\]
with the convention that~$\lim_{t \rightarrow +\infty} \Var({\tb_i^{(t)}}) = +\infty$ if the denominator~$1-(1-\tfrac{n}{n-1} \rho)^2 = 0$.
\end{theorem}

If all agents run $\W(\rho)$, then Agent $i$'s cost at steady state
is captured by $\text{Var}(\rho):=\lim_{t \rightarrow +\infty} \Var\pa{\tb_i^{(t)}}$. Indeed, for every $i$,  since $\tb_i^{(t)}$ is normally distributed,
$ \lim_{t \rightarrow +\infty} \mathbb{E}\left(\left|\tb_i^{(t)}\right|\right) = \sqrt{\tfrac{2}{\pi}\text{Var}(\rho)}$.
The minimal value of this
is achieved when taking $\text{argmin}_\rho \text{Var}(\rho)$ as the responsiveness parameter.

\begin{theorem}\label{thm:weighted_average2}
    The weighted-average algorithm that optimizes group variance among all  weighted-average algorithms $\W(\rho)$ (that use the same responsiveness parameter $\rho$ at all rounds) is
    $\W(\rho^\star)$, where
    \begin{equation} \label{eq:rhostar}
         \rho^\star = \frac{ \smove \sqrt{4\smeas^2 + \pa{\tfrac{n}{n-1}\smove}^2} - \tfrac{n}{n-1}\smove^2}{2\smeas^2}.
    \end{equation} 
\end{theorem}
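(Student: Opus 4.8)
The plan is to treat $\text{Var}(\rho)$ from Theorem~\ref{thm:weighted_average1} as an explicit one-variable function and minimize it by elementary calculus, the only conceptual subtlety being to certify that the interior critical point is a genuine global minimizer. First I would abbreviate $a := \frac{n}{n-1}$ and simplify the denominator of~\eqref{eq:group_var1}: since $1 - (1 - a\rho)^2 = a\rho(2 - a\rho)$, the common factor $a$ cancels and the problem reduces to minimizing
\[
f(\rho) \;=\; \frac{\smeas^2 \rho^2 + \smove^2}{\rho\,(2 - a\rho)}.
\]

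Next I would pin down the feasible domain. This expression is finite and positive exactly on the open interval $\rho \in (0, 2/a)$, where the denominator is strictly positive; outside it the limit in Theorem~\ref{thm:weighted_average1} is $+\infty$ (by the stated convention) and there is nothing to minimize. On this interval the numerator is bounded below by $\smove^2 > 0$, so as $\rho \to 0^+$ and as $\rho \to (2/a)^-$ the denominator tends to $0^+$ while the numerator stays bounded away from $0$; hence $f(\rho) \to +\infty$ at both endpoints. Consequently $f$ attains a global minimum at an interior critical point, and it suffices to locate the critical points.

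Then I would differentiate. Writing $f = N/D$ with $N = \smeas^2\rho^2 + \smove^2$ and $D = 2\rho - a\rho^2$, the stationarity condition $N'D = N D'$ simplifies --- after the cubic-in-$\rho$ terms cancel --- to the quadratic
\[
\smeas^2 \rho^2 + a\,\smove^2\, \rho - \smove^2 = 0.
\]
Its product of roots is $-\smove^2/\smeas^2 < 0$, so there is exactly one positive root; solving and retaining it gives
\[
\rho^\star = \frac{\smove\sqrt{4\smeas^2 + a^2\smove^2} - a\,\smove^2}{2\smeas^2},
\]
which, on substituting $a = \frac{n}{n-1}$ and grouping $a^2\smove^2 = \pa{\frac{n}{n-1}\smove}^2$, is precisely~\eqref{eq:rhostar}. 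Being the unique interior critical point while $f$ blows up at both ends, $\rho^\star$ is the global minimizer. Finally I would check consistency with the range $[0,1]$ of Theorem~\ref{thm:weighted_average1}: evaluating the left-hand side of the quadratic at $\rho = 1$ yields $\smeas^2 + \smove^2/(n-1) > 0$, and since the quadratic is increasing for $\rho > 0$, its positive root satisfies $\rho^\star < 1 \le 2/a$, so $\rho^\star$ indeed lies in the admissible interval.

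The main obstacle I anticipate is not the calculus but the bookkeeping: verifying that the stationarity identity $N'D = N D'$ collapses to the clean quadratic above (the $\rho^3$ terms must cancel exactly), and then massaging the quadratic-formula root into the exact surd form of~\eqref{eq:rhostar}. The structural content --- existence of a minimizer via boundary blow-up, plus uniqueness of the positive critical point --- is short; the only real risk is a sign or factor slip during the simplification.
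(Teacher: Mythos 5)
Your proposal is correct and follows essentially the same route as the paper: both reduce \eqref{eq:group_var1} to the one-variable function $(\smeas^2\rho^2+\smove^2)/(2\rho-\tfrac{n}{n-1}\rho^2)$, differentiate, observe that the cubic terms cancel to yield the quadratic $\smeas^2\rho^2+\tfrac{n}{n-1}\smove^2\rho-\smove^2=0$, and take its unique positive root. If anything, your argument is slightly more complete than the paper's, which ends with an unelaborated ``we check that this corresponds to a minimum,'' whereas you certify global minimality via the boundary blow-up of $f$ on $\pa{0,\tfrac{2(n-1)}{n}}$ together with uniqueness of the positive critical point, and you verify explicitly that $\rho^\star\in(0,1)$.
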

When~$n$ is large, Eq.~\eqref{eq:rhostar} becomes $\rho^\star \approx \tfrac{\smove \sqrt{4\smeas^2 + \smove^2} - \smove^2}{2\smeas^2}$.
Note that the role played by the measurement noise is very different than the role played by the drift. 
For example, if $\smeas\gg \smove$, then $\rho^\star\approx 0$. However, if $\smeas\ll \smove$ then $\rho^\star\approx 1$. Interestingly, if $\smeas=\smove$ then $\rho^\star \approx \tfrac{\sqrt{5}-1}{2}$, which is highly related to the golden ratio.
Moreover, for large~$n$, the minimal $\Var(\rho)$ is
\begin{equation} \label{eq:varstar}
    \Var(\rho^\star) = \tfrac{1}{2}\smove\left( \sqrt{4\smeas^2 + \smove^2} + \smove \right).
\end{equation}
Note that when the measurements are perfect, i.e., $\smeas = 0$, we have $\Var(\rho^\star) = \sigma^2_d$, which is the best achievable value that an agent can hope for, since no algorithm can overcome the drift-noise. \\

\noindent{\bf The impact of communication.}
Our next goal is to understand whether, and if so, to what extent, can performances be improved if 
further communication between agents is allowed. For this purpose, we compare 
the performances of $\W(\rho^\star)$ to the performances of the best algorithm in the centralized (full-communication) setting. 

A natural candidate for an optimal algorithm in the centralized setting is the ``meet at the center'' algorithm. This algorithm first obtains, for each agent, the best possible estimate of the distance from the agent's position to the average position~$\mean{\theta}$, based on all measurements, and then instructs the agent to move by this quantity (towards the estimated average position).
However, it is not immediate to figure out the distances to the average position, and furthermore, quantify the performances of this algorithm.
To this end, we make use of the Kalman filter tool \cite{anderson2012optimal}, by adapting it to our setting. Solving the Kalman filter system associated with the centralized version of our  alignment problem, we obtain an estimate of the relative distance of each agent $i$ from the average position (based on all measurements). To describe these estimates we first define the following.

\begin{definition}[Uncertainty] \label{def:alpha}
    We inductively define the sequence $(\alpha_t)_{t=0}^\infty$. Let $\alpha_0 = n\sigma_0^2/(n-1)$, and for every~$t$, let
    \begin{equation} \label{eq:alpha}
        \alpha_{t+1} = \frac{\smeas^2 \alpha_t}{\tfrac{n}{n-1} \alpha_t + \smeas^2} + \tfrac{n}{n-1}\smove^2. 
    \end{equation} 
\end{definition} 
The value of $\alpha_t$ aims to represent the variance of the (random) error made when estimating the stretch of the agents at round~$t$. When thinking of $n$ as very large, the formula for $\alpha_{t+1}$ is reminiscent of the resistance formula in electrical circuits, where we add in parallel to the current resistor with resistance $\alpha_{t}$ a resistor with resistance $\sigma^2_m$ (standing for the measurement noise), and another resistor in a series, with resistance $\sigma^2_d$ (standing for the drift).
\begin{definition}[Optimal weight] \label{def:rho_star_t}
    For every integer~$t$, let
    \begin{equation*}
        \rho_\star^{(t)} = \frac{\alpha_t}{\tfrac{n}{n-1} \alpha_t + \smeas^2}.
    \end{equation*} 
\end{definition}
The value of $\rho_\star^{(t)}$ aims to indicate by how much to weigh the measurements of round~$t$  in order to move optimally: if~$\smeas^2 \ll \alpha_t$, then~$\rho_\star^{(t)} \approx 1$, meaning that the agent should entirely rely on the measurement. Conversely, if~$\smeas^2 \gg \alpha_t$, then~$\rho_\star^{(t)} \approx 0$, and the agent should not take the measurement into account.

At each round $t$, the Kalman filter returns an estimate of the relative distance of each agent $i$ from the average position, which turns out to be
 \[\rho_\star^{(t)} \Big( \tfrac{n-1}{n} Y_i^{(t)} - \tfrac{1}{n}\sum_{j \in I\setminus i} Y_j^{(t)} \Big).\]
As guaranteed by the properties of the Kalman filter, these estimates minimize the expected sum of square-errors, which can be translated to our desired measure of minimizing the agents' costs.
The ``Meet-at-the-center'' algorithm is given by Algorithm \ref{alg:MatC} below, and the following theorem stating its optimality is proved in Section \ref{sec:meet}.

\begin{algorithm}[ht] 
    \SetAlgoLined
    \ForEach{round~$t$}{
        Consider all measurements at round~$t$,~$\{Y_j^{(t)} \mid 1 \leq j \leq n\}$ \;
        \ForEach{agent~$i$}{
            Set $d\theta_i^{(t)} = \rho_\star^{(t)} \pa{\tfrac{n-1}{n} Y_i^{(t)} - \tfrac{1}{n}\sum_{j \neq i} Y_j^{(t)}}$ \tcc*{Output an estimate of $\mean{\theta} - \theta_i$}
        }
    }
\caption{Meet-at-the-center} \label{alg:MatC}
\end{algorithm}
\begin{theorem} \label{thm:matc_opt}
    Meet-at-the-center is optimal in the centralized setting.
\end{theorem}

Note that in the centralized setting, once we have an optimal algorithm $A$, we can derive another optimal algorithm $B$ by simply shifting all agents, at each round $t$, by a fixed quantity $\lambda_t$. Indeed, such shifts do not influence the relative positions between the agents. Conversely, we show in Appendix~\ref{sec:shifts} 
that all (optimal) deterministic algorithms in the centralized setting are  shifts of one another (note that the shifts $\lambda_t$ are not necessarily the same for all rounds $t$).

As it happens, another solution that follows from the Kalman filter estimations can be described as a distributed weighted-average algorithm  (see Algorithm \ref{alg:weighted-average}). As such, this algorithm, henceforth called $\W^{\star}$, also yields optimal stretches. 
The proof of the following theorem is given in Section \ref{sec:optW}.
\begin{theorem} \label{thm:weight_opt}
    Algorithm $\W^\star=\W({\rho_{\star}^{(t)}})$ is optimal in the centralized setting.
\end{theorem}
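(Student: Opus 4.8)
The plan is to show that, round by round, Algorithm~\ref{alg:MatC} and $\W^\star$ prescribe moves that differ only by a translation common to all agents, and that such a common translation has no effect on any agent's stretch. Since the cost $\cost{i}{t}$ depends only on the stretch $\tb_i^{(t)}$, this will imply that the two algorithms incur exactly the same cost at every round, whence optimality of $\W^\star$ follows immediately from Theorem~\ref{thm:matc_opt}. The only facts I would need about Algorithm~\ref{alg:MatC} are that it is optimal and that its prescribed move is the one written in its pseudocode.

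First I would rewrite the move of Algorithm~\ref{alg:MatC}. Writing $S^{(t)} := \sum_{j=1}^n Y_j^{(t)}$ and using $\sum_{j \neq i} Y_j^{(t)} = S^{(t)} - Y_i^{(t)}$, a direct computation gives
\begin{equation*}
  \frac{n-1}{n}\rho_\star^{(t)}\pa{Y_i^{(t)} - \frac{1}{n-1}\sum_{j \neq i} Y_j^{(t)}} = \rho_\star^{(t)} Y_i^{(t)} - \frac{\rho_\star^{(t)}}{n} S^{(t)}.
\end{equation*}
Thus the Algorithm~\ref{alg:MatC} move equals the $\W^\star$ move $\rho_\star^{(t)} Y_i^{(t)}$ minus the quantity $\frac{\rho_\star^{(t)}}{n} S^{(t)}$, which is identical for every agent $i$. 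Alongside this I would record the elementary observation that adding a common constant to all positions $\theta_j^{(t)}$ leaves every stretch $\tb_i^{(t)} = \centre{i}{t} - \theta_i^{(t)}$ unchanged (both terms increase by the same amount) and hence leaves every measurement $Y_i^{(t)} = \tb_i^{(t)} + N_{m,i}^{(t)}$ unchanged as well.

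The core of the argument is a coupling and induction. I would run both algorithms on the same realizations of the initial positions and of all measurement and drift noises, and claim that the stretches $\tb_i^{(t)}$ — and hence the measurements $Y_i^{(t)}$ — agree under both algorithms for every $i$ and $t$. The base case $t=0$ is immediate since the initial positions coincide. For the inductive step, recall that the stretch update reads
\begin{equation*}
  \tb_i^{(t+1)} = \tb_i^{(t)} + \pa{\frac{1}{n-1}\sum_{j \neq i} d\theta_j^{(t)} - d\theta_i^{(t)}} + \pa{\frac{1}{n-1}\sum_{j \neq i} N_{d,j}^{(t)} - N_{d,i}^{(t)}}.
\end{equation*}
By the induction hypothesis the measurements coincide, so the two sets of moves differ exactly by the common shift $-\frac{\rho_\star^{(t)}}{n} S^{(t)}$ computed above; this shift cancels in the combination $\frac{1}{n-1}\sum_{j \neq i} d\theta_j^{(t)} - d\theta_i^{(t)}$ governing the update (it contributes $-\frac{\rho_\star^{(t)}}{n}S^{(t)} + \frac{\rho_\star^{(t)}}{n}S^{(t)} = 0$). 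Hence $\tb_i^{(t+1)}$ is identical under the two algorithms, closing the induction and giving $\cost{i}{t} = \bbE(|\tb_i^{(t)}|)$ equal for both, for every $i$ and $t$.

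I expect the only genuinely delicate point to be the bookkeeping in the coupling: one must verify that the moves stay related by a common translation at \emph{every} round, which relies on the measurements remaining identical — and this is precisely the invariant the induction maintains. A secondary point worth stating explicitly is that although the absolute positions $\theta_i^{(t)}$ do differ between the two executions (they accumulate different common translations), this is irrelevant, since the cost is a function of the translation-invariant stretches alone. With the equality of costs established, optimality of $\W^\star$ is then just an appeal to Theorem~\ref{thm:matc_opt}.
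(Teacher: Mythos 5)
Your proposal is correct, but it proves the theorem by a genuinely different route than the paper. The paper's proof is a direct one within the Kalman framework: it shows by induction that $\W^\star$ is \emph{Kalman-perfect} (the filter estimate $\thh_t^\minus$ stays at $0$ for all $t$), using the explicit Kalman gain $K_t = -\rho_\star^{(t)} M_n$ from Lemma~\ref{thm:kalman_filter} together with the identity $M_n^2 = \frac{-n}{n-1}M_n$, and then invokes Proposition~\ref{thm:flocking_opt}; it runs in parallel to, rather than on top of, the proof of Theorem~\ref{thm:matc_opt2}. You instead treat Theorem~\ref{thm:matc_opt} as a black box and reduce $\W^\star$ to it: the algebraic identity showing the two moves differ by the common shift $\frac{\rho_\star^{(t)}}{n}\sum_j Y_j^{(t)}$, plus a coupling induction on Eq.~\eqref{eq:stretch_update} showing that common shifts cancel in the stretch dynamics, so the two executions produce identical stretches and hence identical costs $\cost{i}{t}$. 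This reduction is sound (and non-circular, since the paper's proof of Theorem~\ref{thm:matc_opt} does not rely on $\W^\star$). Interestingly, your coupling argument is essentially the claim the paper proves in Appendix~\ref{app:comparison} for illustrative purposes, and your shift-invariance observation underlies Theorem~\ref{thm:opt_charac}; you have repurposed those ingredients as the actual proof. What your route buys is economy and transparency: beyond the optimality of Algorithm~\ref{alg:MatC}, only elementary algebra is needed, and the geometric reason for optimality (the two algorithms induce the same relative configuration) is laid bare. What the paper's route buys is a self-contained verification of the Kalman-perfection property for $\W^\star$ itself, which is the stronger structural fact that, via Proposition~\ref{thm:flocking_opt}, also characterizes \emph{all} optimal deterministic algorithms and avoids any coupling bookkeeping.
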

\begin{algorithm}[ht] 
    \SetAlgoLined
    \ForEach{round~$t$}{
        Consider all measurements at round~$t$,~$\{Y_j^{(t)} \mid 1 \leq j \leq n\}$ \;
        \ForEach{agent~$i$}{
            Set $d\theta_i^{(t)} = \rho_\star^{(t)} Y_i^{(t)}$ \;
        }
    }
\caption{Algorithm $\W^{\star}$} \label{alg:weighted-average}
\end{algorithm}
Since both Meet-at-the-center and $\W^{\star}$ are optimal deterministic algorithms they must be shifts of one another.
Indeed, as validated in \cite{korman:hal-03124213} (Appendix~D.2), by adding the shift $\lambda_t = \tfrac{1}{n} \rho_\star^{(t)}\sum_{i\in I} Y_i^{(t)}$ to the agents in Meet-at-the-center we obtain Algorithm $\W^{\star}$. 

Note that in contrast to $\W(\rho^\star)$, Algorithm $\W^{\star}$ uses a different responsiveness $\rho_{\star}^{(t)}$ at each round $t$. We next argue that the sequences $(\alpha_t)$ and~$(\rho_\star^{(t)})$ converge. Not surprisingly, at the limit, we recover the optimal responsiveness as stated in Theorem~\ref{thm:weighted_average2}. Due to lack of space the proof of the following claim is deferred to Appendix~B.2 of \cite{korman:hal-03124213}.

\begin{claim} \label{claim:asymptotic_alpha}
    The sequence $\alpha_t$ converges to
    \begin{equation*}
        \alpha_{\infty} := \tfrac{1}{2} \left( \smove \sqrt{4\smeas^2 + \pa{\tfrac{n}{n-1}\smove}^2} + \tfrac{n}{n-1}\smove^2 \right).
    \end{equation*}
    Moreover,
    $\lim_{t \rightarrow +\infty} \rho_\star^{(t)} = \rho_\star.$ 
\end{claim}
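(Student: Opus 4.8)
The plan is to view the recursion in Definition~\ref{def:alpha} as the iteration of a fixed map and to invoke a standard monotone convergence argument. Writing $c := \frac{n}{n-1} > 1$, the recursion reads $\alpha_{t+1} = f(\alpha_t)$ with $f(x) = \frac{\smeas^2 x}{c x + \smeas^2} + c\smove^2$. First I would record two structural properties of $f$ on $(0,+\infty)$: it is strictly increasing, since $f'(x) = \frac{\smeas^4}{(cx+\smeas^2)^2} > 0$, and it maps $(0,+\infty)$ into itself (indeed $f(x) > c\smove^2 > 0$). The fixed points of $f$ solve $x(cx+\smeas^2) = \smeas^2 x + c\smove^2(cx+\smeas^2)$, which simplifies to the quadratic $x^2 - c\smove^2 x - \smove^2\smeas^2 = 0$; its unique positive root is exactly the claimed $\alpha_\infty = \tfrac12\pa{\smove\sqrt{4\smeas^2 + (c\smove)^2} + c\smove^2}$.

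Next I would establish convergence. Consider $h(x) := f(x) - x$. Since $h'(x) = \frac{\smeas^4}{(cx+\smeas^2)^2} - 1 < 0$ for all $x > 0$ and $h(0) = c\smove^2 > 0$, the function $h$ is strictly decreasing and vanishes only at $\alpha_\infty$; hence $f(x) > x$ on $[0,\alpha_\infty)$ and $f(x) < x$ on $(\alpha_\infty,+\infty)$. Because $f$ is increasing, starting from $\alpha_0 = c\sigma_0^2 > 0$ the sequence $(\alpha_t)$ is monotone (increasing if $\alpha_0 \le \alpha_\infty$, decreasing otherwise) and bounded by $\alpha_\infty$, so it converges; by continuity of $f$ its limit is a nonnegative fixed point, and since $\alpha_\infty$ is the only such point we conclude $\alpha_t \to \alpha_\infty$.

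Finally, for the responsiveness, observe that $\rho_\star^{(t)} = \frac{\alpha_t}{c\alpha_t + \smeas^2}$ is a continuous function of $\alpha_t$ with strictly positive denominator, so $\rho_\star^{(t)} \to \frac{\alpha_\infty}{c\alpha_\infty + \smeas^2}$. It then remains to check that this limit equals $\rho^\star$ from Eq.~\eqref{eq:rhostar}. Setting $S := \sqrt{4\smeas^2 + (c\smove)^2}$, I would substitute $\alpha_\infty = \tfrac{\smove(c\smove + S)}{2}$ and use $S^2 - (c\smove)^2 = 4\smeas^2$ to simplify $c\alpha_\infty + \smeas^2$ to $\tfrac14 (c\smove + S)^2$, yielding $\frac{\alpha_\infty}{c\alpha_\infty + \smeas^2} = \frac{2\smove}{c\smove + S}$; rationalizing by $S - c\smove$ and again using $S^2 - (c\smove)^2 = 4\smeas^2$ turns this into $\frac{\smove(S - c\smove)}{2\smeas^2} = \rho^\star$. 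The argument contains no serious obstacle: the convergence is the textbook monotone-fixed-point fact for increasing maps, and the only place requiring care is this last algebraic verification, where matching the two closed forms hinges on the identity $S^2 - (c\smove)^2 = 4\smeas^2$.
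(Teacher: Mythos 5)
Your proof is correct, but it reaches convergence by a genuinely different route than the paper. For the convergence of $(\alpha_t)$, you exploit monotonicity: $f$ is increasing on $(0,+\infty)$, the sign of $f(x)-x$ changes only at the unique nonnegative fixed point, hence the orbit is monotone, bounded, and its limit must be that fixed point. The paper instead proves a contraction-type estimate toward the fixed point, namely $\left| f(x) - f(\ell) \right| \leq \frac{a}{a+\ell}\left| x-\ell \right|$ for its reparametrized map $f_{a,b}(x) = \frac{ax}{x+a}+b$ (with $a = \frac{n-1}{n}\smeas^2$, $b = \frac{n}{n-1}\smove^2$), and concludes by induction that $\left|\alpha_t - \alpha_\infty\right| \leq k^t \left|\alpha_0 - \alpha_\infty\right|$ with $k = \frac{a}{a+\ell} < 1$. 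The paper's argument buys an explicit geometric rate of convergence, which your monotone argument does not provide; yours is arguably more elementary (no Lipschitz-type computation) and makes the phase portrait of the iteration transparent. The two treatments of $\lim_{t\to\infty}\rho_\star^{(t)}$ also differ: you substitute the closed form of $\alpha_\infty$ into $\frac{\alpha_\infty}{c\alpha_\infty + \smeas^2}$ and rationalize using $S^2 - (c\smove)^2 = 4\smeas^2$, whereas the paper sidesteps most of that algebra by invoking the fixed-point identity to rewrite the limit as $\frac{1}{\smeas^2}\pa{\alpha_\infty - \frac{n}{n-1}\smove^2}$, after which the closed form drops out immediately. Both verifications are sound; the paper's is slightly slicker, yours is a direct check whose only delicate point is the identity you correctly flag.
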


\noindent{\bf Drift of the center of mass.}
In this section we analyze the drift of the center of mass. Essentially, we show that 
with respect to this drift (and in contrast to the measure of stretch as used in Definition \ref{def:cost}), there is a gap between the centralized and the distributed settings. Specifically, we first show that Meet-at-the-center is ``drift-optimal'', in sense that it achieves the best possible drift among all algorithms in the centralized setting. In contrast, we prove that there is no distributed algorithm that can achieve as small of a drift as the one obtained by the ``meet at the center'' algorithm. Consistent with this result, Algorithm $\W^{\star}$ is not ``drift-optimal'', yet we show that the overhead it incurs in terms of this drift is relatively small, especially when $n$ is large compared to $t$ or when $\smeas$  is small compared to $\smove$.  

Let us now describe these results in more detail. Denote~$\Delta_t = \mean{\theta^{(t)}} - \mean{\theta^{(0)}}$.
By definition (Eq.~\eqref{eq:update_rule}), for a given algorithm, the center of mass changes in round $t$ as follows:
\begin{equation} \label{eq:drift_center_of_mass_t}
    \mean{\theta^{(t+1)}} - \mean{\theta^{(t)}}
    = \tfrac{1}{n} \sum_{i\in I} \pa{d\theta_i^{(t)} + N_{d,i}^{(t)}}.
\end{equation}
The drift of the center of mass after $t$ rounds is therefore 
\begin{align*} 
    \Delta_t &= \tfrac{1}{n} \sum_{s=0}^t \sum_{i\in I} \pa{d\theta_i^{(s)} + N_{d,i}^{(s)}}.
\end{align*}
Let us define {\em inherent-drift} as~$\DInh_t := \tfrac{1}{n} \sum_{s=0}^t \sum_{i\in I}  N_{d,i}^{(s)}$ and {\em relative-drift} as $\DRel_t := \tfrac{1}{n} \sum_{s=0}^t \sum_{i\in I}  d\theta_i^{(s)}$.
With these notations, we have:
\begin{equation} \label{eq:drift_center_of_mass}
    \Delta_t = \DRel_t + \DInh_t.
\end{equation}
Due to lack of space, the proof of the following lemma appears in Appendix E of \cite{korman:hal-03124213}.

\begin{lemma} \label{lem:drift_independence}
    For every protocol and for every round~$t$, $\DInh_t$ and~$\DRel_t$ are independent.
\end{lemma}

A direct consequence of Lemma~\ref{lem:drift_independence} is that
\begin{equation} \label{eq:lb_var}
    \bbE \pa{ \Delta_t^2 } = \bbE \pa{ {\DRel_t}^2 } + \Var \pa{\DInh_t} \geq \Var(\DInh_t),
\end{equation}
justifying the choice of the term ``inherent''. Since $\Var(\DInh_t)=\Var \pa{ \tfrac{1}{n} \sum_{s=0}^t \sum_{i\in I} N_{d,i}^{(s)} } = \tfrac{t+1}{n} \smove^2$, we obtain the following.

\begin{theorem}\label{thm:lower_var}
    For every protocol in the centralized setting and for every round~$t$, the drift of the center of mass satisfies $\bbE \pa{ \Delta_t^2 } \geq  \tfrac{t+1}{n} \smove^2$.
\end{theorem}

\begin{definition}\label{def:drift_optimality}
    An algorithm is called {\em drift-optimal}, if for every round $t$, the drift of the center of mass satisfies $\bbE \pa{ \Delta_t^2 } =  \Var(\DInh_t)= \tfrac{t+1}{n} \smove^2$.
\end{definition}

\begin{theorem} \label{thm:drift_matc}
    Meet-at-the-center is drift-optimal.
\end{theorem}
\begin{proof}
    When the agents follow Algorithm Meet-at-the-center, for every~$t \in \bbN$
    \begin{equation*} 
        \tfrac{1}{n} \sum_{i\in I} d\theta_i^{(t)} = \tfrac{1}{n} \sum_{i\in I} \rho_\star^{(t)} \pa{\tfrac{n-1}{n} Y_i^{(t)} - \tfrac{1}{n}\sum_{j \in I\setminus i} Y_j^{(t)}}         = \tfrac{\rho_\star^{(t)}}{n}  \pa{ \tfrac{n-1}{n}\sum_{i\in I} Y_i^{(t)} - \tfrac{1}{n} \sum_{i\in I} (n-1)\cdot Y_i^{(t)}} 
        = 0.
    \end{equation*}
    Hence, by definition, $\DRel_t=0$ and by Eq.~\eqref{eq:lb_var}, the drift of the center of mass in round~$t$ is limited to~$\DInh_t$, which concludes the proof of Theorem~\ref{thm:drift_matc}.
\end{proof}

Next, we analyze the drift of $\W^{\star}$.

\begin{theorem} \label{thm:drift_wstar}
    When Algorithm~$\W^{\star}$ is used, the drift of the center of mass satisfies $ \bbE\pa{\Delta_t^2} = \tfrac{t+1}{n} \smove^2 + \tfrac{1}{n} \sum_{s=0}^t {\rho_\star^{(s)}}^2 \smeas^2$.
\end{theorem}
\begin{proof}
    Under Algorithm $\W^{\star}$, we have for every~$t \in \bbN$
    \begin{align*}
        \tfrac{1}{n} \sum_{i\in I} d\theta_i^{(t)}
        = \tfrac{1}{n} \sum_{i\in I} \rho_\star^{(t)} \pa{ \tb_i^{(t)} + N_{m,i}^{(t)} } 
        = \tfrac{1}{n} \sum_{i\in I} \rho_\star^{(t)} N_{m,i}^{(t)},
    \end{align*}
    where the last equality is by Lemma~\ref{lem:tb_sum}. Using Eq.~\eqref{eq:lb_var}, and summing over the rounds $s=0,\ldots, t$,
    we conclude the proof of Theorem~\ref{thm:drift_wstar}.
\end{proof}

Note that since $\rho_\star^{(s)}\leq 1$, the overhead incurred by Algorithm~$\W^{\star}$ is at most $\tfrac{t+1}{n} \smeas^2$.

Finally, we argue that the sub-optimality of Algorithm $\W^\star$ (in terms of the drift of the center of mass) is not coincidental. The proof of the following claim is deferred to Appendix~\ref{sec:last_proof}.

\begin{claim} \label{claim:best_drift}
    No algorithm in the distributed setting can be simultaneously optimal (Definition~\ref{def:algorithm_optimality}) and drift-optimal (Definition~\ref{def:drift_optimality}). 
\end{claim}

\subsection{Related works}

The Kalman filter algorithm is a prolific tool commonly used in control theory and statistics, with numerous technological applications \cite{anderson2012optimal}. 
In this paper, we used the Kalman filter to investigate the centralized setting, where all relative measurements are gathered at one master agent that processes them and instructs agents how to move. Fusing relative measurements of multiple agents is often referred to in the literature as {\em distributed Kalman filter}, see surveys in \cite{cao2012overview,olfati2007distributed,ren2005survey}. However, there, the typical setting is that each agent invokes a separate Kalman filter to process its measurements, and then the resulted outputs are integrated using communication. Moreover, works in that domain often assume that observations are attained by static observers (i.e., the agents are residing on a fixed underlying graph) and that the measured target is external to the system. In contrast, here we consider a self-organizing system, with mobile agents that measure a target (average position) that is a function of their positions. Closer works to our setting where a group of mobile agents aim to gather together are \cite{tang2015continuous,huang2009coordination}. However, among the several differences with our paper, these papers assume that there is no drift and that agents communicate their absolute position, rather than their relative position. 

Similarly to our findings, the authors of~\cite{fisher} also emphasize the effectiveness of performing a weighted-average between the opinion of the individual and the sample it receives.
However, we find this result less surprising under their model, which features one-to-one communications and no drift.

Another related self-organization problem is {\em clock synchronization}, where the goal is to maintain a common notion of time in the absence of a global time source. The main difficulty is handling the fact that clocks count time at slightly different rates and that  messages arrive with some delays. Variants of this problem were studied in wireless network contexts, mostly by control theoreticians \cite{wireless,sundararaman2005clock,sivrikaya2004time}. A common technique uses oscillating models \cite{PDE,pulse}. 

\section{Warm-up proofs} \label{sec:warm-up_proofs}

The goal of this section is to prove Theorems~\ref{thm:weighted_average1} and~\ref{thm:weighted_average2}.
These theorems assume all agents execute a weighted-average algorithm with the same responsiveness parameter~$\rho$, that furthermore remains fixed throughout the execution (see Eq.~\eqref{eq:responsive}).
The proofs of all intermediate results can be found in Appendix~\ref{app:weighted}.
We start with the following observation.

\begin{lemma} \label{lem:tb_sum}
    For any integer $t$, and whatever the positions of the agents, we have
    $ \sum_{i\in I} \tb_i^{(t)} = 0$. 
\end{lemma}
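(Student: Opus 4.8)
The plan is to establish the identity by a direct, deterministic computation from the definition of the stretch. Since the claim asserts an exact equality that holds for any fixed configuration of positions (``whatever the positions of the agents''), and not merely in expectation, no probabilistic reasoning is needed; the Gaussian structure of the noise plays no role whatsoever.

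First I would introduce the total sum $S = \sum_{j=1}^n \theta_j^{(t)}$ and use it to simplify the inner sum defining the stretch. Since $\sum_{j \neq i} \theta_j^{(t)} = S - \theta_i^{(t)}$, the stretch of Agent~$i$ can be rewritten as
\begin{equation*}
    \tb_i^{(t)} = \centre{i}{t} - \theta_i^{(t)} = \frac{1}{n-1}\pa{S - \theta_i^{(t)}} - \theta_i^{(t)} = \frac{S}{n-1} - \frac{n}{n-1}\theta_i^{(t)},
\end{equation*}
which expresses each stretch as an affine function of $\theta_i^{(t)}$ sharing a common constant term $\frac{S}{n-1}$.

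Next I would sum this expression over all agents $i \in \{1,\ldots,n\}$. The constant term contributes $n \cdot \frac{S}{n-1} = \frac{nS}{n-1}$, while the remaining term contributes $-\frac{n}{n-1}\sum_{i=1}^n \theta_i^{(t)} = -\frac{n}{n-1}S$. These two quantities cancel exactly, yielding $\sum_{i=1}^n \tb_i^{(t)} = 0$, as desired. (Equivalently, one could invoke the relation $\mean{\theta^{(t)}} - \theta_i^{(t)} = \frac{n-1}{n}\tb_i^{(t)}$ and observe that deviations from the mean sum to zero, but the self-contained computation above avoids relying on that identity.)

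There is no genuine obstacle here: the lemma is a purely algebraic identity reflecting the fact that the stretches measure deviations from an average, and such deviations always sum to zero. The only point requiring a modicum of care is the bookkeeping of the $\frac{1}{n-1}$ and $\frac{n}{n-1}$ factors; no inequality, limiting argument, or appeal to the noise distributions is involved.
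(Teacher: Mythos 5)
Your proof is correct and is essentially the same direct algebraic computation as the paper's: the paper expands $\sum_i \tb_i^{(t)}$ and exchanges the double sum so that each $\theta_j^{(t)}$ is counted $n-1$ times, while you equivalently substitute $\sum_{j\neq i}\theta_j^{(t)} = S - \theta_i^{(t)}$ before summing; both are one-line cancellations with no probabilistic content. Nothing is missing.
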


Next, we compute how the stretch  $\tb_i^{(t)}$ of Agent $i$ changes when all agents perform weighted-average moves with constant responsiveness parameter~$\rho$.

\begin{lemma} \label{lem:new_theta_bar}
Assume that all agents execute $\W(\rho)$, for some $0\leq \rho\leq 1$. Let $\calE_j^{(t)} = \rho N_{m,j}^{(t)} + N_{d,j}^{(t)}$.
Then for every $i \in I$ and every $t \in \mathbb{N}$,
    \begin{equation*}
        \tb_i^{(t+1)} = \left( 1 - \tfrac{n}{n-1} \rho \right) \tb_i^{(t)} + \tfrac{1}{n-1} \sum_{j\in I\setminus i} \calE_j^{(t)} - \calE_i^{(t)}.
    \end{equation*}
\end{lemma}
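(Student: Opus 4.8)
The plan is to compute $\tb_i^{(t+1)} = \centre{i}{t+1} - \theta_i^{(t+1)}$ directly from the update rule and then simplify using the conservation identity of Lemma~\ref{lem:tb_sum}. First I would expand a single agent's position at round $t+1$. Since every agent runs $\W(\rho)$, the move of Agent~$j$ is $d\theta_j^{(t)} = \rho Y_j^{(t)} = \rho\pa{\tb_j^{(t)} + N_{m,j}^{(t)}}$, so the update rule~\eqref{eq:update_rule} gives
\[
\theta_j^{(t+1)} = \theta_j^{(t)} + \rho\,\tb_j^{(t)} + \rho N_{m,j}^{(t)} + N_{d,j}^{(t)} = \theta_j^{(t)} + \rho\,\tb_j^{(t)} + \calE_j^{(t)},
\]
where the last equality is just the definition of $\calE_j^{(t)}$. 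This holds for every $j$, including $j=i$.

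Next I would average this identity over all $j \neq i$ to obtain the new center of the others, and subtract the $j=i$ instance. Averaging yields
\[
\centre{i}{t+1} = \centre{i}{t} + \frac{\rho}{n-1}\sum_{\substack{j=1 \\ j\neq i}}^n \tb_j^{(t)} + \frac{1}{n-1}\sum_{\substack{j=1 \\ j\neq i}}^n \calE_j^{(t)},
\]
and subtracting $\theta_i^{(t+1)} = \theta_i^{(t)} + \rho\,\tb_i^{(t)} + \calE_i^{(t)}$, while recalling that $\tb_i^{(t)} = \centre{i}{t} - \theta_i^{(t)}$, gives
\[
\tb_i^{(t+1)} = \tb_i^{(t)} - \rho\,\tb_i^{(t)} + \frac{\rho}{n-1}\sum_{\substack{j=1 \\ j\neq i}}^n \tb_j^{(t)} + \frac{1}{n-1}\sum_{\substack{j=1 \\ j\neq i}}^n \calE_j^{(t)} - \calE_i^{(t)}.
\]

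The one step that requires more than bookkeeping is collapsing the sum $\sum_{j\neq i}\tb_j^{(t)}$. Here I would invoke Lemma~\ref{lem:tb_sum}, which states $\sum_{k=1}^n \tb_k^{(t)} = 0$ and hence $\sum_{j\neq i}\tb_j^{(t)} = -\tb_i^{(t)}$. Substituting this collects all stretch terms onto $\tb_i^{(t)}$ with coefficient $1 - \rho - \frac{\rho}{n-1} = 1 - \frac{n}{n-1}\rho$, which is exactly the claimed factor, while the noise terms $\frac{1}{n-1}\sum_{j\neq i}\calE_j^{(t)} - \calE_i^{(t)}$ are already in the desired form. I do not expect any genuine obstacle: the computation is entirely linear, and the only non-mechanical ingredient is the conservation identity of Lemma~\ref{lem:tb_sum}, which is precisely what decouples the recursion for $\tb_i^{(t)}$ from the stretches of the other agents.
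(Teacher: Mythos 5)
Your proposal is correct and follows essentially the same route as the paper: both expand the round-$(t+1)$ positions via the update rule and the weighted-average move $d\theta_j^{(t)} = \rho\pa{\tb_j^{(t)} + N_{m,j}^{(t)}}$, and both invoke Lemma~\ref{lem:tb_sum} to replace $\sum_{j\neq i}\tb_j^{(t)}$ by $-\tb_i^{(t)}$, yielding the coefficient $1-\frac{n}{n-1}\rho$. The only difference is cosmetic ordering (you expand every agent's position and then subtract, whereas the paper first isolates the increment $\centre{i}{t+1}-\centre{i}{t}$), so the two arguments are the same computation.
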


Now we can prove that the stretch of each agent is normally distributed at every round, and compute its variance.

\begin{lemma} \label{lem:theta_bar}
    Assume that all agents execute $\W(\rho)$, for some $0\leq \rho\leq 1$. Then, for every $i \in I$ and every $t \in \mathbb{N}$, the stretch $\tb_i^{(t)}$ is normally distributed, $\bbE \pa{\tb_i^{(t)}} = 0$, and
    \begin{equation*}
        \Var\pa{\tb_i^{(t+1)}} = (1-\tfrac{n \rho}{n-1})^2 \cdot \Var\pa{\tb_i^{(t)}} + \frac{n(\rho^2 \smeas^2 + \smove^2)}{n-1}.
    \end{equation*}
\end{lemma}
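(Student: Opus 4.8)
The plan is to establish all three assertions---normality of $\tb_i^{(t)}$, the fact that $\bbE(\tb_i^{(t)}) = 0$, and the variance recursion---simultaneously by induction on $t$, leaning on the linear recursion already derived in Lemma~\ref{lem:new_theta_bar}. For the base case $t=0$, I would note that the initial positions $\theta_1^{(0)},\ldots,\theta_n^{(0)}$ are mutually independent with each $\theta_j^{(0)} \sim \calN(0,\sigma_0^2)$, so that $\tb_i^{(0)} = \centre{i}{0} - \theta_i^{(0)}$ is a fixed linear combination of independent Gaussians; it is therefore normally distributed, with mean $0$ by linearity of expectation.

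For the inductive step, I would invoke Lemma~\ref{lem:new_theta_bar} to write $\tb_i^{(t+1)}$ as a linear combination of $\tb_i^{(t)}$ and the fresh noise terms $\calE_j^{(t)} = \rho N_{m,j}^{(t)} + N_{d,j}^{(t)}$. Each $\calE_j^{(t)}$ is a linear combination of the independent centred Gaussians $N_{m,j}^{(t)}$ and $N_{d,j}^{(t)}$, hence itself centred Gaussian with variance $\rho^2\smeas^2 + \smove^2$, and the family $(\calE_j^{(t)})_j$ is mutually independent because the underlying perturbations are. The key observation is that $\tb_i^{(t)}$ is a deterministic function of the initial positions and of the perturbations at rounds strictly before $t$, and is therefore independent of the round-$t$ perturbations. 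Consequently $\tb_i^{(t+1)}$ is a linear combination of mutually independent Gaussian variables, so it is again Gaussian with mean $0$, closing the induction for the first two claims.

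The variance recursion would then fall out of the same independence: since $\tb_i^{(t)}$ is independent of the noise contribution $\frac{1}{n-1}\sum_{j\neq i}\calE_j^{(t)} - \calE_i^{(t)}$, there is no cross-covariance, so $\Var(\tb_i^{(t+1)})$ splits as $(1-\tfrac{n}{n-1}\rho)^2\Var(\tb_i^{(t)})$ plus the variance of that noise contribution. Using independence across $j$ together with $\Var(\calE_j^{(t)}) = \rho^2\smeas^2+\smove^2$, the latter evaluates to $\big(\tfrac{n-1}{(n-1)^2} + 1\big)(\rho^2\smeas^2+\smove^2) = \tfrac{n}{n-1}(\rho^2\smeas^2+\smove^2)$, matching the claim. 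I expect the only genuinely delicate point to be the independence between $\tb_i^{(t)}$ and the round-$t$ perturbations: this single fact is what simultaneously licenses the Gaussianity (a linear combination of independent Gaussians is Gaussian) and the vanishing of the covariance cross-terms, so I would state it carefully, e.g.\ by noting that $\tb_i^{(t)}$ is measurable with respect to the $\sigma$-algebra generated by $\theta^{(0)}$ and $(N_m^{(s)},N_d^{(s)})_{s<t}$, which is independent of the round-$t$ noise.
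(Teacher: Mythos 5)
Your proof is correct and follows essentially the same route as the paper: induction on $t$ via the recursion of Lemma~\ref{lem:new_theta_bar}, with the noise variance $\frac{n}{n-1}(\rho^2\smeas^2+\smove^2)$ computed identically. The only difference is that you make explicit the independence of $\tb_i^{(t)}$ from the round-$t$ perturbations (via measurability with respect to the history $\sigma$-algebra), a point the paper's proof uses implicitly when concluding Gaussianity of the sum and dropping the cross-covariance terms; this is a welcome clarification rather than a departure.
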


Before proving Theorem~\ref{thm:weighted_average1}, we need a small technical result.

\begin{claim} \label{claim:convergence}
    Let $a,b\geq0$. Consider the sequence $\{u_n\}_{n=0}^\infty$ defined by letting $u_0 \in \bbR$ and for every integer~$n$, $u_{n+1} = a u_n + b$.  If $a<1$, then $\{u_n\}_{n=0}^\infty$ converges and $\lim_{n \rightarrow +\infty} u_n = b/(1-a)$. If $a = 1$, and~$b>0$, then $\lim_{n \rightarrow +\infty} u_n = +\infty$.
\end{claim}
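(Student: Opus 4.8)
The plan is to handle the two cases separately, exploiting the fact that an affine recurrence can be linearized by subtracting off its fixed point. First consider the case $a < 1$ (so that $0 \leq a < 1$, since we assume $a \geq 0$). The map $x \mapsto ax + b$ has a unique fixed point $\ell := b/(1-a)$, obtained by solving $\ell = a\ell + b$. I would then define the shifted sequence $e_n := u_n - \ell$ and observe that
\[
    e_{n+1} = u_{n+1} - \ell = a u_n + b - (a\ell + b) = a(u_n - \ell) = a\, e_n.
\]
A trivial induction gives $e_n = a^n e_0$, i.e. $u_n = \ell + a^n(u_0 - \ell)$. Since $0 \leq a < 1$, we have $a^n \to 0$, and hence $u_n \to \ell = b/(1-a)$, which is the claimed limit.

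For the case $a = 1$ and $b > 0$, the recurrence becomes $u_{n+1} = u_n + b$, and unrolling (again a one-line induction) yields $u_n = u_0 + nb$. Since $b > 0$, the right-hand side tends to $+\infty$ as $n \to +\infty$, giving the second assertion.

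There is essentially no hard part here: the only ingredient beyond elementary algebra is the standard fact that $a^n \to 0$ whenever $0 \leq a < 1$, which justifies the convergence in the first case. The mild point worth stating carefully is that the hypothesis $a \geq 0$ (together with $a < 1$) places $a$ in $[0,1)$, so $a^n$ is a nonnegative sequence decreasing to $0$; one could equally note that $|a| < 1$ suffices, but the stated bounds make the argument immediate. I would present the fixed-point / shift computation as the main display and relegate the two inductions to a single sentence each.
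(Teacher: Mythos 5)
Your proof is correct and matches the paper's argument: both subtract the fixed point $\ell = b/(1-a)$ to obtain a geometric recurrence $e_{n+1} = a e_n$ (the paper writes this as $v_n = u_n + \lambda$ with $\lambda = b/(a-1)$, which is the same shift), and both unroll $u_n = u_0 + nb$ in the case $a=1$, $b>0$. No substantive differences.
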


\begin{proof}(of Theorem~\ref{thm:weighted_average1}.)
   First,  apply Lemma~\ref{lem:theta_bar}. Hence, by Claim~\ref{claim:convergence}, $\Var \big(\tb_i^{(t)}\big)$ converges to the limit as stated.
    Note that the variance is infinite if either (1) the responsiveness is equal to~$0$, in which case the drift adds up endlessly, or (2) the responsiveness is equal to~$1$ and $n=2$, in which case the agents ``swap'' at each round, producing the same result. This completes the proof of Theorem~\ref{thm:weighted_average1}. 
\end{proof}

\begin{proof}(of Theorem \ref{thm:weighted_average2}.)
    Consider the function
    \begin{equation*}
        \Var(\rho) = \frac{ \tfrac{n}{n-1}(\rho^2\smeas^2 + \smove^2)}{1-(1-\tfrac{n}{n-1} \rho)^2}=\frac{\rho^2\smeas^2 + \smove^2}{2\rho - \tfrac{n}{n-1}\rho^2}.
    \end{equation*}
    Note that this function evaluates to~$+\infty$ when~$\rho=0$, or when~$\rho=1$ and~$n=2$. Computing its derivative yields:
    \begin{equation*}
        \Var'(\rho) = \frac{2\rho\smeas^2 \pa{2\rho - \tfrac{n}{n-1}\rho^2} - (\rho^2\smeas^2 + \smove^2)\pa{2-2\tfrac{n}{n-1}\rho}}{\pa{2\rho - \tfrac{n}{n-1}\rho^2}^2}.
    \end{equation*}
Hence, $\Var'(\rho) = 0$ if and only if:
\begin{equation*}
    \rho^2\smeas^2 \pa{2 - \tfrac{n}{n-1}\rho} 
        - (\rho^2\smeas^2 + \smove^2)\pa{1-\tfrac{n}{n-1}\rho} = 0 \iff \rho^2\smeas^2 - \smove^2\pa{1-\tfrac{n}{n-1}\rho} = 0.
    \end{equation*}
    This equation has a unique solution in the interval~$[0,1]$, which is $\rho = \tfrac{ \smove \sqrt{4\smeas^2 + \pa{\tfrac{n}{n-1}\smove}^2} - \tfrac{n}{n-1}\smove^2}{2\smeas^2}$. Checking that this is a minimum  concludes the proof. 
\end{proof}

\section{Solving the alignment problem} \label{sec:flocking_problem}

We denote $\calN \pa{\mu,\Sigma}$ the multivariate normal distribution with mean vector $\mu \in \bbR^{n}$ and co-variance matrix $\Sigma \in \bbR^{n \times n}$, and by $I$ the identity matrix.
Letting $\one$ be the matrix whose all coefficients are equal to $1$, we denote
\[\calM(a,b) = b\one + (a-b)I,\]
the matrix having diagonal coefficients equal to $a$, and other coefficients equal to $b$.

\subsection{Rephrasing the alignment problem as a linear filtering problem}

The goal of this section is to write Eqs.~\eqref{eq:noise} and~\eqref{eq:update_rule} in a matrical form.
Let~$\tb^{(t)}$, $d\theta^{(t)}$, $Y^{(t)}$, $N_m^{(t)}$, and~$N_d^{(t)}$ denote the vectors (indexed by the set of agents) corresponding to each variable. (E.g., $\tb^{(t)} = (\tb_1^{(t)},\ldots,\tb_n^{(t)})$, and similarly for the other vectors.)

\paragraph{\bf Measurement rule} Recall that, as stated in Eq.~\eqref{eq:noise}, the measurement of Agent~$i$ at time $t$ is given by 
$Y_i^{(t)} = \tb_i^{(t)} + N_{m,i}^{(t)}$. 
We simply rewrite this equation as:
\begin{equation} \label{eq:kalman_meas2}
    Y^{(t)} = \tb^{(t)} + N_m^{(t)},
\end{equation}
where, by definition, $N_m^{(t)} \sim \calN \pa{0,\smeas^2 I}$. 

\paragraph{\bf Update rule} We recall the update equation of the stretch, which follows from Eq.~\eqref{eq:update_rule} (see also Eq.~\eqref{eq:new_theta_bar}):
\begin{equation} \label{eq:stretch_update}
\begin{split}
    &\tb_i^{(t+1)} = \tb_i^{(t)} - d\theta_i^{(t)} - N_{d,i}^{(t)} + \pa{ \centre{i}{t+1} - \centre{i}{t} } \\
    &= \tb_i^{(t)} - d\theta_i^{(t)} - N_{d,i}^{(t)} + \tfrac{1}{n-1} \sum_{j\in I\setminus i} \pa{  d\theta_j^{(t)} + N_{d,j}^{(t)} }.
\end{split}
\end{equation}
We define the matrix~\[M_n = \calM \pa{ -1, \tfrac{1}{n-1} }.\]
Let $\tilde{N}_d^{(t)} = M_n N_d^{(t)}$. With these definitions and Eq.~\eqref{eq:stretch_update}, we write Eq.~\eqref{eq:update_rule} in the following vector notation:
\begin{equation} \label{eq:kalman_update2}
    \tb^{(t+1)} = \tb^{(t)} + M_n d\theta^{(t)} + \tilde{N}_d^{(t)}.
\end{equation}
By definition, $N_d^{(t)} \sim \calN \pa{0,\smove^2 I}$, so by Claim~\ref{claim:multivariate_affine_transf}, $\tilde{N}_d^{(t)} \sim \calN \pa{0,Q}$ where~$Q = \smove^2 \cdot M_n I M_n^\top = \smove^2 M_n^2$.

\subsection{Applying the Kalman filter}

We use the Kalman filter as a tool for analysis purposes of the process. We stress, however, that the Kalman filter is not actually applied by any agent. 

We denote by $\thh_t$ the estimate of $\theta^{(t)}$ produced by the Kalman filter \textit{after} the measurements at round $t$ were obtained (Eq.~\eqref{eq:kalman_meas2}) and before the update of round $t$ occurs (Eq.~\eqref{eq:kalman_update2}). Let~$P_t$ denote the error co-variance matrix associated with this estimate. Specifically, $P_t = \bbE \pa{ (\theta^{(t)} - \thh_t) (\theta^{(t)} - \thh_t)^\top }$.
We add the superscript~``$\minus$'' to these notations (for example $P_t^\minus$), to denote the same quantities \textit{before} the measurement obtained at round~$t$.

So, in a sense, round $t$ can be divided into the following four consecutive time steps: (a) the filter produces an estimation $\thh_t^\minus$ of $\theta^{(t)}$, (b) a new measurement vector~$Y^{(t)}$ of $\theta^{(t)}$ is obtained, (c) the filter produces an estimation $\thh_t$ of $\theta^{(t)}$ given the new measurement, and (d) $\theta^{(t)}$ is updated to~$\theta^{(t+1)}$.
Recall that at round 0, i.e., at the initialization stage, the agents are normally distributed around 0. For technical reasons, we define the Kalman filter estimate at round 0 to be zero, that is, $\thh_0^\minus = 0$.

\paragraph{\bf Measurement update} In order to produce (c), the filter incorporates the measurement in (b) to the estimation in~(a). Specifically, the filter computes the ``Kalman gain'':
\begin{equation} \label{eq:kalman_def_1}
    K_t = P_t^\minus \pa{P_t^\minus + \smeas^2 I}^{-1}.
\end{equation}
Then, it produces the following estimate, as required in  (c):
\begin{equation} \label{eq:kalman_def_2}
    \thh_t = \thh_t^\minus + K_t \pa{Y^{(t)} - \thh_t^\minus}.
\end{equation}
The new error co-variance matrix can then be written as:
\begin{equation} \label{eq:kalman_def_3}
    P_t = \pa{I-K_t} P_t^\minus.
\end{equation}

\paragraph{\bf Time update} The estimation for (a) for the following round is then given by:
\begin{equation} \label{eq:kalman_def_4}
    \thh_{t+1}^\minus = \thh_t + M_n d\theta^{(t)}.
\end{equation}
Finally, the new error co-variance matrix that is used in the Kalman gain corresponding to round $t+1$ (Eq.~\eqref{eq:kalman_def_1}) is:
\begin{equation} \label{eq:kalman_def_5}
    P_{t+1}^\minus = P_t + \smove^2 \cdot M_n^2.
\end{equation}

\subsection{Optimality}

\begin{definition}
    A {\em history}~$\calH_t$ is a realization of all measurements and all moves up to time~$(t-1)$. 
\end{definition}
The following is a well-known result, see e.g., \cite{anderson2012optimal}. 
\begin{theorem} \label{thm:kalman_filter_opt}
    For every round~$t$, every history~$\calH_{t}$ and every vector~$x \in \bbR^n$,
    \begin{equation*}
        \sum_{i\in I} \bbE \pa{(\theta_i^{(t)} - \thh_{t,i}^\minus)^2 \mid \calH_{t}} \leq \sum_{i\in I} \bbE \pa{(\theta_i^{(t)} - x_i)^2 \mid \calH_{t}}. 
    \end{equation*}
    In particular, for every round~$t$, every history~$\calH_{t}$, every~$i \leq n$ and every scalar~$x \in \bbR$,
    \begin{equation*}
        \bbE \pa{(\theta_i^{(t)} - \thh_{t,i}^\minus)^2 \mid \calH_{t}} \leq \bbE \pa{(\theta_i^{(t)} - x)^2 \mid \calH_{t}}. 
    \end{equation*}
\end{theorem}

\begin{definition}
    We say that an algorithm for the alignment problem is {\em Kalman-perfect} if it always produces a sequence of moves $(d\theta^{(t)})_{t \geq 0}$, such that for every integer~$t\geq 1$,
    $\thh_{t}^\minus = 0$. 
\end{definition}
The proof of the following result is provided in Appendix~\ref{app:proof_prop_2}.
\begin{proposition} \label{thm:flocking_opt}
    If there exists a Kalman-perfect algorithm for the alignment problem, then this algorithm is optimal in the centralized setting (in the sense of Definition~\ref{def:algorithm_optimality}). Moreover, any other optimal (deterministic) algorithm is Kalman-perfect.
\end{proposition}

\subsection{Kalman gain and the error co-variance matrix}

Recall the sequences $(\alpha_t)_{t \geq 0}$ and $(\rho_\star^{(t)})_{t \geq 0}$  introduced in Definitions~\ref{def:alpha} and~\ref{def:rho_star_t}.
This section is dedicated to proving the following lemma, which provides a formula for the error co-variance matrix associated with the estimate~$\thh_t^\minus$, as well as for the Kalman gain.

\begin{lemma} \label{thm:kalman_filter}
    For every integer $t$, $P_t^\minus = \calM \pa{ \alpha_t, \tfrac{-\alpha_t}{n-1} } = -\alpha_t M_n$, and $K_t = \frac{-\tfrac{n-1}{n} \alpha_t M_n}{\tfrac{n}{n-1}\smeas^2 + \alpha_t} = -\rho_\star^{(t)} M_n.$
\end{lemma}
\begin{proof}
    We prove the first part of the claim by induction, and prove that for every round $t$, the second part of the claim (regarding $K_t$) follows from the first part (regarding $P^\minus_t$).
    
    By Claim~\ref{claim:multivariate_affine_transf} (see Appendix~\ref{sec:preliminary}), $P_0^\minus = \sigma_0^2 M_n^2$.
    By Claim~\ref{claim:multiplication} (see Appendix~\ref{sec:preliminary}),
    \begin{equation*}
        M_n^2 = \calM \pa{ \tfrac{n}{n-1} , - \tfrac{n}{(n-1)^2} } = \tfrac{-n}{n-1} M_n.
    \end{equation*}
    Therefore, $P_0^\minus = -\tfrac{n}{n-1} \sigma_0^2 M_n$, and so the first part of the claim holds at round $0$ since $\alpha_0 = \tfrac{n}{n-1} \sigma_0^2$. 
    
    Now, let us assume that the first part of the claim holds for some $t \in \bbN$. It follows that $P_t^\minus + \smeas^2 I = \calM \pa{ \alpha_t + \smeas^2, \tfrac{-\alpha_t}{n-1} }$.
    By Claim~\ref{claim:inverse} (see Appendix~\ref{sec:preliminary}), since $\smeas > 0$, we have
    \begin{align*}
        \pa{ P_t^\minus + \smeas^2 I }^{-1} = \frac{\calM \pa{ \alpha_t + \smeas^2 - (n-2) \cdot \tfrac{\alpha_t}{n-1} , \tfrac{\alpha_t}{n-1} }}{ \pa{ \alpha_t + \smeas^2 + \tfrac{\alpha_t}{n-1} } \pa{ \alpha_t + \smeas^2 - (n-1) \cdot \tfrac{\alpha_t}{n-1} }} = \frac{\calM \pa{ \tfrac{\alpha_t}{n-1} + \smeas^2 , \tfrac{\alpha_t}{n-1} }}{ \smeas^2 \pa{ \tfrac{n}{n-1}\alpha_t + \smeas^2 }}.
    \end{align*}
    By Claim~\ref{claim:multiplication} again, we can compute the ``Kalman gain'':
    \begin{align*}
        K_t &= P_t^\minus \pa{ P_t^\minus + \smeas^2 I }^{-1} \\
        &= \frac{1} { \smeas^2 \pa{ \tfrac{n}{n-1}\alpha_t + \smeas^2 }} \cdot \calM \bigg( \tfrac{\alpha_t^2}{n-1} + \smeas^2 \alpha_t - \tfrac{(n-1)\cdot \alpha_t^2}{(n-1)^2} ,  \tfrac{-\alpha_t^2}{(n-1)^2} - \tfrac{\smeas^2 \alpha_t}{n-1} + \tfrac{\alpha_t^2}{n-1} - (n-2) \cdot \tfrac{\alpha_t^2}{(n-1)^2} \bigg) \\
        &= \frac{\calM \pa{ \smeas^2 \alpha_t , \tfrac{-\smeas^2 \alpha_t}{n-1} }}{ \smeas^2 \pa{ \tfrac{n}{n-1}\alpha_t + \smeas^2 }} 
        = \frac{-\alpha_t M_n}{\tfrac{n}{n-1}\alpha_t + \smeas^2},
    \end{align*}
    This proves that the second part of the claim holds at round $t$.
    Next, we compute the error co-variance matrix after the measurement:
    \begin{align*}
        P_t &= (I - K_t) \cdot  P_t^\minus = \frac{\calM \pa{\tfrac{\alpha_t}{n-1} + \smeas^2 , \tfrac{\alpha_t}{n-1}}}{\tfrac{n}{n-1}\alpha_t + \smeas^2} \cdot P_t^\minus \\
        &= \frac{1} {\tfrac{n}{n-1}\alpha_t + \smeas^2} \cdot \calM \bigg( \tfrac{\alpha_t^2}{n-1} + \smeas^2 \alpha_t - (n-1) \cdot \tfrac{\alpha_t^2}{(n-1)^2} , -\tfrac{\alpha_t^2}{(n-1)^2} - \tfrac{\smeas^2 \alpha_t}{n-1} + \tfrac{\alpha_t^2}{n-1} - (n-2) \cdot \tfrac{\alpha_t^2}{(n-1)^2} \bigg) \\
        &= \frac{\calM \pa{\smeas^2 \alpha_t ,  \tfrac{-\smeas^2 \alpha_t}{n-1} }}{\tfrac{n}{n-1}\alpha_t + \smeas^2} 
        = -\frac{\tfrac{n-1}{n} \smeas^2 \alpha_t}{\alpha_t + \tfrac{n-1}{n} \smeas^2} M_n.
    \end{align*}
    Eventually, we compute the error co-variance matrix at round~$t+1$, before the measurement:
    \begin{equation*}
        P_{t+1}^\minus = P_t + \smove^2 M_n^2 = P_t - \tfrac{n}{n-1}\smove^2 M_n.
    \end{equation*}
   Plugging in the expression of $P_t$, we get
    \begin{align*}
        P_{t+1}^\minus = -\frac{\tfrac{n-1}{n} \smeas^2 \alpha_t}{\alpha_t + \tfrac{n-1}{n} \smeas^2} M_n - \tfrac{n}{n-1}\smove^2 M_n = - \pa{\frac{\tfrac{n-1}{n} \smeas^2 \alpha_t}{\tfrac{n-1}{n} \smeas^2 + \alpha_t} + \tfrac{n}{n-1}\smove^2} M_n 
        = -\alpha_{t+1} M_n,
    \end{align*}  
    which concludes the induction proof.
\end{proof}

\subsection{Meet-at-the-center}\label{sec:meet}

The goal of this section is to prove Theorem \ref{thm:matc_opt} stating that Algorithm Meet-at-the-center is optimal in the centralized setting (in the sense of Definition \ref{def:algorithm_optimality}). 
The choice of the name comes from the fact that this algorithm minimizes the drift of the center of mass (see Theorem~\ref{thm:drift_matc}).

\begin{proof}(of Theorem \ref{thm:matc_opt}.)
    By Proposition~\ref{thm:flocking_opt}, it is sufficient to prove that Algorithm Meet-at-the-center is Kalman-perfect, that is, that  for every integer~$t\geq 1$, the Kalman filter associated with the moves it produces gives the estimate $\thh_t^\minus = 0$. 
    
    For this purpose, assume that all agents run Algorithm Meet-at-the-center. We prove by induction that for every integer~$t\geq 0$, the Kalman filter produces the estimate $\thh_t^\minus = 0$.
    This holds at $t=0$ since we assumed that the Kalman filter estimates 0 at round 0, i.e.,~$\thh_0^\minus=0$.
    Next, let us assume that $\thh_t^\minus = 0$ for some integer~$t\geq 0$. We have by definition,
   \begin{equation} \label{eq:def_recall1'}
        \thh_{t+1}^\minus = \thh_t + M_n d\theta^{(t)},
    \end{equation}
    and
    \begin{equation} \label{eq:def_recall2'}
        \thh_t = \thh_t^\minus + K_t \pa{Y^{(t)} - \thh_t^\minus} = K_t Y^{(t)},
    \end{equation}
    where the second equality in Eq.~\eqref{eq:def_recall2'} is by the  induction hypothesis. 
    By Lemma~\ref{thm:kalman_filter},
    $K_t = -\rho_\star^{(t)} M_n.$
    Note that, by definition of Algorithm Meet-at-the-center,
    $d\theta^{(t)} = -\tfrac{n-1}{n}\rho_\star^{(t)} M_n Y^{(t)} = \tfrac{n-1}{n} \thh_t.$
    Finally, by the equations above, we rewrite Eq.~\eqref{eq:def_recall1'} as:
    \begin{align*}
        \thh_{t+1}^\minus = \pa{I + \tfrac{n-1}{n} M_n} \thh_t = \pa{I + \tfrac{n-1}{n} M_n} K_t Y^{(t)} = - \pa{M_n + \tfrac{n-1}{n} M_n^2} \rho_\star^{(t)} Y^{(t)}.
    \end{align*}
    Since $M_n^2 = \tfrac{-n}{n-1} M_n$, we have~$\thh_{t+1}^\minus = 0$. This concludes the induction, and completes the proof of the theorem.
\end{proof}

\subsection{Algorithm \texorpdfstring{$\W^{\star}$}{Wstar} is optimal in the centralized setting}\label{sec:optW} 
The goal of this section is to prove Theorem~\ref{thm:weight_opt}. The proof follows the same line of arguments as the proof of Theorem~\ref{thm:matc_opt}.

\begin{proof}(of Theorem~\ref{thm:weight_opt})
    Our goal is to prove that Algorithm~$\W^{\star}$ is Kalman-perfect, that is, for every integer~$t\geq 1$, the Kalman filter associated with the moves produced by Algorithm~$\W^{\star}$ gives the estimate $\thh_t^\minus = 0$. This would conclude the proof of the theorem, by  Proposition~\ref{thm:flocking_opt}. 
    
    For this purpose, assume that all agents run Algorithm~$\W^{\star}$. We prove by induction that for every integer~$t\geq 0$, the Kalman filter produces the estimate $\thh_t^\minus = 0$.

    The base case, where $t=0$, holds since we assumed that the Kalman filter estimates zero at round zero, i.e.,~$\thh_0^\minus=0$.
    Next, let us assume that $\thh_t^\minus = 0$ for some integer~$t\geq 0$, and consider $t+1$. We have by definition,
    \begin{equation} \label{eq:def_recall1}
        \thh_{t+1}^\minus = \thh_t + M_n d\theta^{(t)},
    \end{equation}
    and
    \begin{equation} \label{eq:def_recall2}
        \thh_t = \thh_t^\minus + K_t \pa{Y^{(t)} - \thh_t^\minus} = K_t Y^{(t)},
    \end{equation}
    where the second equality in Eq.~\eqref{eq:def_recall2} is by induction hypothesis. By Lemma~\ref{thm:kalman_filter}, $K_t = -\rho_\star^{(t)} M_n$. Hence, Eq.~\eqref{eq:def_recall1} rewrites
    \begin{equation*}
        \thh_{t+1}^\minus = M_n \pa{ -\rho_\star^{(t)} Y^{(t)} + d\theta^{(t)} }.
    \end{equation*}
    Finally, by the definition of~$\W^{\star} $, we have $d\theta^{(t)} = \rho_\star^{(t)} Y^{(t)}$, so $\thh_{t+1}^\minus = 0$, concluding the induction step.
\end{proof}

\section{Discussion and Future Work}

We introduced the distributed {\em alignment} problem, aiming to study basic algorithmic aspects that govern approximate agreement processes in unreliable stochastic environments, while relying on measurements of the overall group tendency. 
We proposed a weighted-average algorithm, termed $\W^\star$, and proved that it achieves optimal stretches, that is, it minimizes the expected distance of each agent from the center of mass of others, at each round.
In fact, the optimality of $\W^\star$ holds also in the full-communication (or centralized) model, where communication of internal states is unrestricted. 
Algorithm $\W^{\star}$ is based on having a different responsiveness parameter that is applied by each agent at each round. We also find the optimal weighted-average algorithm among the ones that use the same responsiveness at all times. In contrast to $\W^{\star}$ that minimizes the expected stretch at every round separately, this algorithm becomes optimal only at steady-state. 

In our setting, the cost of an individual is defined as its expected distance from the average position of other agents.
Minimizing this quantity is equivalent to minimizing the expected distance from the average position of all agents (see Footnote 2).
Another interesting measure is the expected {\em diameter} of the group, defined as the maximal distance between any two agents, at steady state. It would not come as a surprise if Algorithm~$\W^\star$ would turn out to be optimal also with respect to this measure, however, analyzing its expected diameter would require handling further dependencies between agents, and therefore remains for future work.  

It would also be interesting to see how our work could be extended to scenarios in which each agent can observe a subset of randomly chosen agents at each round. As mentioned in the introduction, averaging the values of the agents in such a subset can serve as a noisy sample of average value in the population. However, in contrast to our model, the noise in this sample is not fixed, and depends on the distribution of values in the population and on the size of the sampled subset. In many cases, one can expect that sampling a subset containing a logarithmic number of agents could suffice to obtain a relatively good sample of the population. We find this encouraging towards generalizing our work to such scenarios.

\paragraph{Acknowledgment}
The authors thank Yongcan Cao for a helpful discussion regarding related works on distributed Kalman filter, as well as anonymous reviewers whose critical comments helped improve the presentation of the paper significantly.

\bibliography{ref}{}
\bibliographystyle{plain}

\newpage
\appendix

\section{Proof of Proposition~\ref{thm:flocking_opt}}
\label{app:proof_prop_2}

Proposition~\ref{thm:flocking_opt} is a direct consequence of the following result.
\begin{lemma}
    Fix $i \in I$ and an integer~$t\geq 1$, and consider
    a history~$\calH_t$ such that the Kalman filter produces~$\thh_{t,i}^\minus = 0$ on~$\calH_t$.
    Let~$d\vartheta^{(t-1)}$ be an alternative move for the agents at round~$(t-1)$, and consider history~$\calH_t'$ obtained from~$\calH_t$ by replacing~$d\theta^{(t-1)}$ by $d\vartheta^{(t-1)}$.
    Then,
    \begin{equation*}
        \bbE \pa{ {\theta_i^{(t)}}^2 \mid \calH_t' } \geq \bbE \pa{ {\theta_i^{(t)}}^2 \mid \calH_t }.
    \end{equation*}
\end{lemma}

\begin{proof}
    Let $c$ be the~$i$th coordinate of the vector~$M_n (d\theta^{(t)} - d\vartheta^{(t)})$. By Eq.~\eqref{eq:update_rule}, $c$ is the difference between the position of Agent~$i$ in the beginning of round~$t$ if it moves by~$d\theta^{(t)}$ instead of moving by~$d\vartheta^{(t)}$ (conditioning on having the same drift at the end of round $t-1)$.
    In other words,
    \begin{equation} \label{eq:c1}
        \bbE \pa{ {\theta_i^{(t)}}^2 \mid \calH_t' } = \bbE \pa{ (\theta_i^{(t)}-c)^2 \mid \calH_t }.
    \end{equation}
    By Theorem~\ref{thm:kalman_filter_opt},
    \begin{equation} \label{eq:c2}
        \bbE \pa{ (\theta_i^{(t)}-c)^2 \mid \calH_t } \geq \bbE \pa{ (\theta_i^{(t)}-\thh_{t,i})^2 \mid \calH_t }.
    \end{equation}
    By assumption, $\thh_{t,i} = 0$, so
    \begin{equation} \label{eq:c3}
        \bbE \pa{ (\theta_i^{(t)}-\thh_{t,i})^2 \mid \calH_t } = \bbE \pa{ {\theta_i^{(t)}}^2 \mid \calH_t }.
    \end{equation}
    Combining Eqs.~\eqref{eq:c1}, \eqref{eq:c2} and~\eqref{eq:c3} establishes the result.
\end{proof}

\section{More Results on Weighted-Average Algorithms} \label{app:weighted}

\begin{proof} (of Lemma~\ref{lem:tb_sum}.)
    \begin{align*}
        \sum_{i\in I} \tb_i^{(t)} &= \sum_{i\in I} \left( \centre{i}{t} - \theta_i^{(t)} \right) 
        = \tfrac{1}{n-1} \sum_{i\in I} \sum_{j\in I\setminus i} \theta_i^{(t)} - \sum_{i\in I} \theta_i^{(t)} \\
        &= \tfrac{1}{n-1}\cdot (n-1) \sum_{i\in I} \theta_i^{(t)} - \sum_{i\in I} \theta_i^{(t)} = 0.
    \end{align*}
\end{proof}

\begin{proof} (of Lemma~\ref{lem:new_theta_bar}.)
    Agent $i$'s stretch at round~$t+1$ is:
    \begin{equation} \label{eq:new_theta_bar}
    \begin{split}
        &\tb_i^{(t+1)} = \centre{i}{t+1} - \theta_i^{(t+1)} = \centre{i}{t+1} - (\theta_i^{(t)} + d\theta_i^{(t)} + N_{d,i}^{(t)} ) \\
        &= \left( \centre{i}{t+1} - \centre{i}{t} \right) + \pa{\centre{i}{t} - \theta_i^{(t)}} - d\theta_i^{(t)} - N_{d,i}^{(t)} \\
        &= \left( \centre{i}{t+1} - \centre{i}{t} \right) + \tb_i^{(t)} - d\theta_i^{(t)} - N_{d,i}^{(t)},
    \end{split}
    \end{equation}
    where the first equality is by definition, and the second equality is a consequence of~\eqref{eq:update_rule}.
    Let us break down the first term:
    \begin{align*}
        \centre{i}{t+1} - \centre{i}{t} &= \tfrac{1}{n-1}\sum_{j\in I\setminus i} \pa{ d\theta_j^{(t)} + N_{d,j}^{(t)}} \\
        &= \tfrac{1}{n-1}\sum_{j\in I\setminus i} \pa{ \rho \left(\tb_j^{(t)} + N_{m,j}^{(t)} \right) + N_{d,j}^{(t)}},
    \end{align*}
    where the second equality is because
    $d\theta_j^{(t)} = \rho Y_j^{(t)} = \rho \pa{ \tb_j^{(t)} + N_{m,j}^{(t)} }.$
    By Lemma~\ref{lem:tb_sum}, $\sum_{j\in I\setminus i} \tb_j^{(t)} = - \tb_i^{(t)}$, so we can rewrite the last equation as
    \begin{equation} \label{eq:center_drift2}
    \begin{split}
        \centre{i}{t+1} - \centre{i}{t} &= \tfrac{-\rho}{n-1} \tb_i^{(t)} + \tfrac{1}{n-1} \sum_{j\in I\setminus i} \rho N_{m,j}^{(t)} + N_{d,j}^{(t)} \\
        &= \tfrac{-\rho}{n-1} \tb_i^{(t)} + \tfrac{1}{n-1} \sum_{j\in I\setminus i} \calE_j^{(t)}.
    \end{split}
    \end{equation}
    Plugging Eq.~\eqref{eq:center_drift2} into Eq.~\eqref{eq:new_theta_bar} gives
    \begin{equation} \label{eq:new_theta_bar2}
        \tb_i^{(t+1)} = \pa{1 - \tfrac{\rho}{n-1}} \tb_i^{(t)} + \tfrac{1}{n-1} \sum_{j\in I\setminus i} \calE_j^{(t)} - d\theta_i^{(t)} - N_{d,i}^{(t)}.
    \end{equation}
    By assumption, $d\theta_i^{(t)} = \rho Y_i^{(t)} = \rho \pa{ \tb_i^{(t)} + N_{m,i}^{(t)} }$. Plugging this expression into Eq.~\eqref{eq:new_theta_bar2} gives the result.
\end{proof}

\begin{proof} (of Lemma~\ref{lem:theta_bar}.)
We prove that the stretch is normally distributed with mean~$0$ by induction on~$t$. By construction, for every~$i$, $\tb_i^{(0)}$ is normally distributed with mean~$0$. Let us assume that $\tb_i^{(t)}$ is normally distributed with mean~$0$ for some round~$t$, and consider round $t+1$.
Recall that Lemma~\ref{lem:new_theta_bar} gives
\begin{equation} \label{eq:new_theta_bar3}
    \tb_i^{(t+1)} = \pa{ 1 - \tfrac{n}{n-1}\rho } \tb_i^{(t)} + \tfrac{1}{n-1} \sum_{j \in I\setminus i} \calE_j^{(t)} - \calE_i^{(t)}.
\end{equation}
Since by definition, for every $j$, $\calE_j^{(t)}$ is normally distributed around $0$, and by induction $\tb_i^{(t)}$ is  normally distributed around~$0$, then $\tb_i^{(t+1)}$ is also normally distributed around~$0$. This concludes the induction.
Noting that~$\Var(\calE_j^{(t)}) = \rho^2 \smeas^2 + \smove^2$ and taking the variance in Eq.~\eqref{eq:new_theta_bar3} concludes the proof.
\end{proof}

\begin{proof} (of Claim~\ref{claim:convergence}.)
    First, consider the case that~$a<1$. Let $\lambda = b/(a-1)$. Consider the sequence defined by $v_n = u_n + \lambda$. We have $v_{n+1} = u_{n+1} + \lambda = au_n + b + \lambda 
        = au_n + (a-1)\lambda + \lambda 
        = au_n + a\lambda = a(u_n + \lambda) = av_n$.
    Since $|a| < 1$, $\lim v_n = 0$, and so $\lim u_n = -\lambda = b/(1-a)$.
    
    Now, if~$a=1$, then we have for every~$n \in \bbN$, $u_n = u_0 + nb$. If~$b>0$ then $\lim u_n = +\infty$.
\end{proof}

\section{Useful linear algebra claims} \label{sec:preliminary}

\begin{claim} \label{claim:multivariate_affine_transf}
    If $X \sim \calN \pa{\mu,\Sigma}$, then for every $c \in \bbR^{n}$ and $B \in \bbR^{n \times n}$,
   $c + BX \sim \calN \pa{ c + B\mu , B\Sigma B^\top }$. 
\end{claim}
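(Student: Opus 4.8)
The plan is to establish this standard closure property of the multivariate Gaussian family via characteristic functions, which is the cleanest rigorous route given that $B$ is an arbitrary matrix in $\bbR^{n\times n}$ and may well be singular (indeed, the matrix $M_n$ to which this claim is later applied has $0$ as an eigenvalue, hence is singular). First I would record the mean and covariance of $c + BX$ by linearity of expectation: $\bbE\pa{c + BX} = c + B\mu$, and, writing $Y = c + BX$ and using that the additive constant $c$ does not affect second moments, $\bbE\pa{(Y - \bbE Y)(Y - \bbE Y)^\top} = B\,\bbE\pa{(X-\mu)(X-\mu)^\top}B^\top = B\Sigma B^\top$. These already identify the candidate parameters; what remains is to argue that $Y$ is itself Gaussian, not merely a variable with these first two moments.

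For that, I would invoke the characteristic function of $X \sim \calN\pa{\mu,\Sigma}$, namely $\phi_X(s) = \exp\pa{i s^\top \mu - \tfrac12 s^\top \Sigma s}$ for $s \in \bbR^n$, together with the uniqueness theorem: a probability distribution is determined by its characteristic function. The key computation is then a direct substitution,
\begin{equation*}
    \phi_Y(s) = \bbE\pa{\exp\pa{i s^\top (c + BX)}} = \exp\pa{i s^\top c}\,\phi_X\pa{B^\top s} = \exp\pa{i s^\top \pa{c + B\mu} - \tfrac12 s^\top \pa{B\Sigma B^\top} s},
\end{equation*}
where the middle step pulls the deterministic factor $\exp(i s^\top c)$ out of the expectation and rewrites $s^\top B X = \pa{B^\top s}^\top X$, and the last step expands $\phi_X$ at the argument $B^\top s$ using the quadratic-form identity $\pa{B^\top s}^\top \Sigma \pa{B^\top s} = s^\top B\Sigma B^\top s$. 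The right-hand side is exactly the characteristic function of $\calN\pa{c + B\mu,\, B\Sigma B^\top}$, so uniqueness yields the claim.

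The only point that requires care — and the reason I would prefer characteristic functions over a change-of-variables / Jacobian argument on densities — is that $B\Sigma B^\top$ may be singular, in which case $Y$ is a \emph{degenerate} Gaussian supported on a proper affine subspace and admits no density on $\bbR^n$. The characteristic-function computation above is insensitive to this, since it never divides by $\det B$ nor inverts $B\Sigma B^\top$; it produces the correct (possibly degenerate) Gaussian law uniformly in the rank of $B$. I therefore expect no genuine obstacle here, only the routine bookkeeping of the affine substitution, which is why the paper is justified in treating it as a recalled fact.
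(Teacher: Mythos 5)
Your proof is correct. Note, however, that the paper itself offers no proof to compare against: the claim is introduced with ``We first recall the following well-known property'' and is stated as a recalled fact, so your write-up is filling a gap the authors deliberately left open rather than paralleling an existing argument. Your route---identify the first two moments by linearity, then verify that the characteristic function of $c+BX$ equals $\exp\pa{i s^\top \pa{c+B\mu} - \tfrac12 s^\top B\Sigma B^\top s}$ and invoke uniqueness---is the standard rigorous one, and your insistence on avoiding a density/Jacobian argument is exactly the right instinct here: the paper applies the claim with $B = M_n$, whose kernel contains the all-ones vector, so $B$ is genuinely singular and $c+BX$ has no density on $\bbR^n$. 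The only point you lean on implicitly is a definition of the (possibly degenerate) law $\calN\pa{m,V}$ for singular $V$---e.g., as the distribution whose characteristic function is $\exp\pa{i s^\top m - \tfrac12 s^\top V s}$, or equivalently the one under which every linear functional is univariate normal; with either convention your uniqueness step closes the argument, so this is bookkeeping rather than a gap.
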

Claim~\ref{claim:multivariate_affine_transf} above is well-known, and Claim~\ref{claim:multiplication} below follows from a straightforward matrix multiplication.
\begin{claim} \label{claim:multiplication}
    For every $a,b,a',b' \in \bbR$, $\calM(a,b) \calM(a',b') = \calM(aa' + (n-1)bb' , ab' + a'b + (n-2)bb')$.
\end{claim}

\begin{claim} \label{claim:inverse}
    For every $a,b \in \bbR$ such that $a \neq b$ and $a \neq - (n-1)b$, the matrix $\calM(a,b)$ is invertible, and $\calM(a,b)^{-1} = \frac{\calM \pa{ a + (n-2)b , -b }}{(a-b)(a + (n-1)b)}$.
\end{claim}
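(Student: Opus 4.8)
The plan is to verify the claimed formula directly, by multiplying $\calM(a,b)$ with the proposed inverse and checking that the product is a scalar multiple of the identity, rather than computing a determinant or running Gaussian elimination. The key leverage is Claim~\ref{claim:multiplication}: the product of two matrices of the form $\calM(\cdot,\cdot)$ is again of that form, so the entire verification collapses to one application of that claim followed by an elementary factorization.

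Concretely, I would set $a' = a + (n-2)b$ and $b' = -b$, and compute $\calM(a,b)\,\calM(a',b')$ via Claim~\ref{claim:multiplication}. For the off-diagonal coefficient $ab' + a'b + (n-2)bb'$, substituting these values yields $-ab + (a+(n-2)b)b - (n-2)b^2 = 0$, so the product has zero off-diagonal entries and is therefore a scalar multiple of the identity. For the diagonal coefficient $aa' + (n-1)bb'$, substituting yields $a^2 + (n-2)ab - (n-1)b^2$, which factors as $(a-b)\pa{a+(n-1)b}$. Hence $\calM(a,b)\,\calM(a+(n-2)b,-b) = (a-b)\pa{a+(n-1)b}\,I$.

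Under the hypotheses $a \neq b$ and $a \neq -(n-1)b$, the scalar $(a-b)\pa{a+(n-1)b}$ is nonzero, so dividing both sides by it exhibits $\frac{\calM(a+(n-2)b,-b)}{(a-b)(a+(n-1)b)}$ as a right inverse of $\calM(a,b)$. Since Claim~\ref{claim:multiplication} also records that $\calM$-matrices commute, this right inverse is automatically two-sided, which establishes both the invertibility of $\calM(a,b)$ and the stated formula in one stroke.

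There is no genuine obstacle here — the statement is a verification — so the only care needed is in the algebraic factoring of the diagonal coefficient, which I would double-check by expanding $(a-b)\pa{a+(n-1)b} = a^2 + (n-2)ab - (n-1)b^2$. As a sanity check that the two exclusion conditions are exactly the right ones, I would note the eigenstructure of $\calM(a,b) = b\one + (a-b)I$: the all-ones vector is an eigenvector with eigenvalue $a+(n-1)b$, while its orthogonal complement forms an $(n-1)$-dimensional eigenspace with eigenvalue $a-b$. Thus $\det \calM(a,b) = \pa{a+(n-1)b}(a-b)^{n-1}$, which vanishes precisely when $a = b$ or $a = -(n-1)b$, confirming that the hypotheses are both necessary and sufficient for invertibility.
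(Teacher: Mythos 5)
Your proof is correct, and it takes a genuinely different route from the paper's. You verify the candidate inverse by one application of Claim~\ref{claim:multiplication}: with $a' = a+(n-2)b$ and $b' = -b$, the off-diagonal coefficient $ab' + a'b + (n-2)bb'$ vanishes and the diagonal coefficient $aa' + (n-1)bb' = a^2 + (n-2)ab - (n-1)b^2$ factors as $(a-b)\pa{a+(n-1)b}$, giving $\calM(a,b)\,\calM\pa{a+(n-2)b,-b} = (a-b)\pa{a+(n-1)b} I$; your algebra checks out. The paper instead \emph{derives} the inverse without assuming its form: writing $A = \calM(a,b) = b\one + (a-b)I$ and using $\one^2 = n\one$, it expands $A^2 = (2a+(n-2)b)A - (a-b)\pa{a+(n-1)b}I$, a Cayley--Hamilton-style quadratic relation that it then solves for $A^{-1} = \frac{(2a+(n-2)b)I - A}{(a-b)(a+(n-1)b)}$, which coincides with the stated matrix. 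The underlying computations are essentially equivalent, but the paper's version explains where the formula comes from, while yours is shorter given the candidate and reuses the closure property of the $\calM$-algebra that the paper exploits elsewhere (e.g., in Lemma~\ref{thm:kalman_filter}). Two minor remarks: your appeal to commutativity to upgrade the right inverse to a two-sided one is fine but unnecessary, since a right inverse of a square matrix is automatically two-sided; and your eigenstructure observation, yielding $\det \calM(a,b) = \pa{a+(n-1)b}(a-b)^{n-1}$, establishes something the claim does not assert --- that the hypotheses $a \neq b$ and $a \neq -(n-1)b$ are also \emph{necessary} for invertibility --- which is a nice strengthening absent from the paper's proof.
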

\begin{proof}
    Note that $\one^2 = n\one$. Let $A = \calM(a,b)$. We have
    \begin{equation*}
    \begin{split}
        A^2 &= (b\one + (a-b)I)^2 
            = b^2 \one^2 + 2b(a-b)\one + (a-b)^2 I \\
            &= (nb + 2(a-b)) (b\one) + (a-b)^2 I \\
            &= (2a + (n-2)b) (b\one + (a-b)I) \\
            &~~~~ - (nb + 2(a-b))(a-b)I + (a-b)^2 I \\
            &= (2a + (n-2)b) A + (a-b)((a-b) - nb - 2(a-b)) I \\
            &= (2a + (n-2)b) A - (a-b)(a + (n-1)b) I.
    \end{split}
    \end{equation*}
    Hence $A (A - (2a + (n-2)b)I) = - (a-b)(a + (n-1)b) I$, from which we conclude the proof of Claim~\ref{claim:inverse}.
\end{proof}

\newcommand{\wstar}[1]{\left[#1 \right]^{\W^\star}}
\newcommand{\matc}[1]{\left[#1 \right]^{\textnormal{MatC}}}
\newcommand{\vtb}{\overline{\vartheta}}

\section{Additional claims about optimal algorithms}

\subsection{All optimal algorithms are shifts of one another} \label{sec:shifts}

In this section, we characterize all optimal (deterministic) algorithms for the alignment problem in the centralized setting. We show that each of these algorithms can be obtained from $\W^\star$, by shifting all the agents by the same quantity $\lambda_t$.
\begin{theorem} \label{thm:opt_charac}
    A deterministic algorithm is optimal in the centralized setting if 
    and only if 
    for every round~$t$, there exists~$\lambda_t$ such that for every~$i \in I$, $d\theta_i^{(t)} = \rho_\star^{(t)} Y_i^{(t)} + \lambda_t$.
\end{theorem}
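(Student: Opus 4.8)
The plan is to prove Theorem~\ref{thm:opt_charac} by combining the characterization of optimal algorithms as Kalman-perfect (Proposition~\ref{thm:flocking_opt}) with the fact that the Kalman filter tracks only relative positions, so that a uniform shift of all agents is invisible to it. The ``if'' direction should be the easy one, and the ``only if'' direction the substantive one.

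For the \textbf{if} direction, suppose an algorithm sets $d\theta_i^{(t)} = \rho_\star^{(t)} Y_i^{(t)} + \lambda_t$ for every $i$ at every round. I would show this algorithm is Kalman-perfect, hence optimal by Proposition~\ref{thm:flocking_opt}. The key observation is that the time-update equation $\thh_{t+1}^\minus = \thh_t + M_n d\theta^{(t)}$ only sees $d\theta^{(t)}$ through $M_n d\theta^{(t)}$, and since $M_n = \calM(-1,\tfrac{1}{n-1})$ has row sums equal to $-1 + (n-1)\cdot\tfrac{1}{n-1} = 0$, we have $M_n \mathbf{1} = 0$ (where $\mathbf{1}$ is the all-ones vector). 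Thus the constant-shift term $\lambda_t \mathbf{1}$ is annihilated: $M_n(\rho_\star^{(t)} Y^{(t)} + \lambda_t \mathbf{1}) = \rho_\star^{(t)} M_n Y^{(t)}$. Since the remaining term is exactly the move produced by $\W^\star$, which was already shown Kalman-perfect in Theorem~\ref{thm:weight_opt}, the same induction gives $\thh_t^\minus = 0$ for all $t$, so the shifted algorithm is also Kalman-perfect and hence optimal.

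For the \textbf{only if} direction, suppose a deterministic algorithm is optimal. By Proposition~\ref{thm:flocking_opt} it is Kalman-perfect, so $\thh_t^\minus = 0$ for every $t$. I would proceed by induction on $t$, using the Kalman filter recursions. Since $\thh_t^\minus = 0$ and $\thh_{t+1}^\minus = 0$, the time-update equation forces $\thh_t + M_n d\theta^{(t)} = 0$, where $\thh_t = K_t Y^{(t)} = -\rho_\star^{(t)} M_n Y^{(t)}$ by Lemma~\ref{thm:kalman_filter} together with $\thh_t^\minus = 0$. Substituting gives $M_n\bigl(d\theta^{(t)} - \rho_\star^{(t)} Y^{(t)}\bigr) = 0$. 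The crux is then a kernel computation: a vector $v$ satisfies $M_n v = 0$ if and only if $v$ is a constant multiple of $\mathbf{1}$, i.e.\ $v = \lambda_t \mathbf{1}$ for some scalar $\lambda_t$. This is because $M_n = \calM(-1,\tfrac{1}{n-1})$ and, by Claim~\ref{claim:inverse} with $a=-1$, $b=\tfrac{1}{n-1}$, the matrix is singular precisely along the eigenvector $\mathbf{1}$ corresponding to the zero row-sum, while its other eigenvalue $a-b = -1 - \tfrac{1}{n-1} \neq 0$ has no kernel. Concluding $d\theta^{(t)} - \rho_\star^{(t)} Y^{(t)} = \lambda_t \mathbf{1}$ gives exactly $d\theta_i^{(t)} = \rho_\star^{(t)} Y_i^{(t)} + \lambda_t$ for all $i$.

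The main obstacle I anticipate is pinning down the kernel of $M_n$ cleanly and making sure the shift $\lambda_t$ is genuinely allowed to depend on $t$ but not on $i$ within a round; the row-sum-zero argument handles this, since the kernel is exactly the one-dimensional span of $\mathbf{1}$, forcing a single scalar per round. A secondary subtlety is that $d\theta^{(t)}$ at round $t$ is determined by the measurements and past moves, so the value $\lambda_t$ is well-defined given the history, and the induction closes because Kalman-perfectness at all rounds was assumed, rather than derived round-by-round from optimality of future rounds. I should be careful that the argument uses $\thh_t^\minus = 0$ for \emph{both} $t$ and $t+1$, which is exactly what Kalman-perfectness supplies.
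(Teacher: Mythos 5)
Your proposal is correct and follows essentially the same route as the paper: both reduce optimality to Kalman-perfectness via Proposition~\ref{thm:flocking_opt}, use Lemma~\ref{thm:kalman_filter} to write $\thh_t = K_t Y^{(t)} = -\rho_\star^{(t)} M_n Y^{(t)}$, and conclude via $\ker(M_n) = \mathrm{span}(\mathbf{1})$. The only cosmetic difference is that the paper runs both directions at once through a single chain of equivalences (asserting $\mathrm{rank}(M_n) = n-1$ directly), whereas you split the two implications and justify the kernel computation through the eigenvalues of $\calM(-1,\tfrac{1}{n-1})$, which is equally valid.
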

\begin{proof}
    We have already established that Algorithm $\W^\star$ is Kalman-perfect.
    Therefore, by Proposition~\ref{thm:flocking_opt}, any other deterministic algorithm is optimal in the centralized setting if 
    and only if it is Kalman-perfect. In other words, it is optimal if and only if
    it produces a sequence of moves such that for every round~$t$, the Kalman-filter estimator operating on the corresponding process yields
    \begin{align*}
        \thh_{t+1}^\minus = 0 &\iff \thh_t + M_n d\theta^{(t)} = 0 \\
        &\iff K_t Y^{(t)} + M_n d\theta^{(t)} = 0 \\
        &\iff - \rho_\star^{(t)} M_n Y^{(t)} + M_n d\theta^{(t)} = 0 \\
        &\iff M_n \pa{ - \rho_\star^{(t)} Y^{(t)} + d\theta^{(t)} } = 0 \\
        &\iff - \rho_\star^{(t)} Y^{(t)} + d\theta^{(t)} \in \ker(M_n).
    \end{align*}
    Writing~$\textbf{1}$ to denote the vector whose coefficients are all equal to~$1$, we observe that~$\textbf{1} \in \ker(M_n)$.
    Since $\textnormal{rank}(M_n) = n-1$, $\dim(\ker(M_n)) = 1$, so for every round~$t$,
    \begin{multline*}
        - \rho_\star^{(t)} Y^{(t)} + d\theta^{(t)} \in \ker(M_n) \\
        \iff \exists \lambda_t \in \bbR, ~- \rho_\star^{(t)} Y^{(t)} + d\theta^{(t)} = \lambda_t \cdot \textbf{1} \\
        \iff \exists \lambda_t \in \bbR, ~d\theta^{(t)} = \lambda_t \cdot \textbf{1} + \rho_\star^{(t)} Y^{(t)}.
    \end{multline*}
\end{proof}

\subsection{Proof of Claim~\ref{claim:best_drift}} \label{sec:last_proof}

    Let~$A$ be an optimal algorithm.
    Again, we denote the variables involved in the execution of~$A$ by~$[\cdot]^A$.
    We have shown in Appendix~\ref{sec:shifts} that all optimal algorithms are shifts of one another, i.e., for all~$t$ we can find a random variable~$\lambda_t$ s.t. for all~$i$,
    $\left[ {d\theta_i^{(t)}} \right]^A = \matc{d\theta_i^{(t)}} + \lambda_t$.
    It implies that
    $\tfrac{1}{n} \sum_{i\in I} \left[ d\theta_i^{(t)} \right]^A = \lambda_t + \tfrac{1}{n} \sum_{i\in I} \matc{d\theta_i^{(t)}}$, which is equal to~$\lambda_t$ (the second term is~$0$ following the definition of Algorithm~\ref{alg:MatC}).
    Therefore, $\left[\DRel_t \right]^A = \tfrac{1}{n} \sum_{s=0}^t \sum_{i\in I} \left[ d\theta_i^{(s)} \right]^A = \sum_{s=0}^t \lambda_s$.
    If~$A$ is different than Algorithm~\ref{alg:MatC}, we can define~$t_0 := \min \{ t \in \bbN, \bbE\pa{ \lambda_{t}^2} > 0 \} < +\infty$, and have
    \begin{equation*}
        \bbE \pa{ {\left[\DRel_{t_0} \right]^A}^2} = \bbE \bigg( \bigg( \sum_{s=0}^{t_0} \lambda_s \bigg)^2 \bigg) \geq \sum_{s=0}^{t_0} \bbE\pa{ \lambda_s^2 } >  0.
    \end{equation*}
    By Eq.~\eqref{eq:lb_var}, we conclude that, among optimal algorithm, ``meet at the center'' is the only one minimizing~$\Delta_t$ for every~$t$. Note that it is not implementable in the distributed setting, because it requires for each agent to know all measurements. This concludes the proof of Claim~\ref{claim:best_drift}.

\end{document}